\documentclass[runningheads]{llncs}
\usepackage{float}
\usepackage[T1]{fontenc}
\usepackage[utf8]{inputenc}
\usepackage[normalem]{ulem}
\usepackage[inline]{enumitem}
\usepackage[]{outlines}
\usepackage{enumitem}
\usepackage[dvipsnames]{xcolor}
\usepackage{pifont}
\usepackage{amsmath,amssymb,comment,scalerel}
\usepackage{array}
\usepackage{ragged2e}
\usepackage{makecell}
\usepackage{svg}
\usepackage{tikz}
\usetikzlibrary{arrows.meta, positioning}
\usepackage{bussproofs}
\usepackage{listings}
\usepackage{lipsum}
\usepackage{stmaryrd}
\usepackage[linesnumbered,noend]{algorithm2e} 
\SetKwInput{KwIn}{In}
\SetKwInput{KwOut}{Out}
\SetArgSty{upshape}
\SetKw{KwGoTo}{go to}
\SetKwProg{Fn}{def}{\string:}{}

\SetKwProg{KwMatch}{match}{:}{}

\usepackage[bookmarksnumbered,unicode]{hyperref}
\hypersetup{                    
  colorlinks,
  pdfborder={0 0 0},
  pdfmenubar=false,
  pdfpagemode=UseNone,
  pdfnonfullscreenpagemode=UseNone,
  bookmarksopen=true,
  breaklinks=true,
  linkcolor={green!50!black},
  citecolor={red!50!black},
  urlcolor={blue!50!black}
}
\usepackage[hide]{comments}
\Commenter{LP}
\Commenter{JP}
\Commenter{CB}
\Commenter{VR}

\usepackage[square,comma,numbers,sort&compress]{natbib}
\usepackage[english]{babel}
\usepackage{xspace}
\usepackage{tikz}
\usetikzlibrary{arrows.meta,decorations.pathreplacing,scopes,calc}
\tikzset{>=Latex[]}
\usepackage{booktabs}
\usepackage{caption}
\usepackage{subcaption}
\usepackage{mdframed}
\usepackage[most]{tcolorbox}
\usepackage{rotating}
\usepackage{cleveref}

\let\llncssubparagraph\subparagraph
\let\subparagraph\paragraph
\usepackage[compact]{titlesec}
\let\subparagraph\llncssubparagraph
\titlespacing*{\section}{0pt}{*2.0}{*2.0}   
\titlespacing*{\subsection}{0pt}{*1.5}{*1.0} 
\titlespacing*{\subsubsection}{0pt}{*0.6}{*0.4}

\newif\ifshortversion
\shortversionfalse

\newif\ifanonymous

\newcommand{\xspacemm}{\ifmmode\else\xspace\fi}
\newcommand{\mathnoun}[1]{\ensuremath{#1}\xspacemm}
\newcommand{\newmathnoun}[2]{\newcommand{#1}{\mathnoun{#2}}}

\newmathnoun{\FF}{\mathbb{F}}
\newmathnoun{\BB}{\mathbb{B}}
\newmathnoun{\ZZ}{\mathbb{Z}}
\newmathnoun{\ZZNN}{\mathbb{Z}_{\geq0}}
\newmathnoun{\PosZZ}{\mathbb{Z}_{\infty}^{+}}
\newmathnoun{\QQ}{\mathbb{Q}}
\newmathnoun{\NN}{\mathbb{N}}
\newmathnoun{\NNOne}{\mathbb{N}_{> 0}}
\newmathnoun{\C}{\mathcal{C}}
\newmathnoun{\EE}{\mathbb{E}}
\newmathnoun{\GG}{\mathbb{G}}
\newmathnoun{\sP}{\mathcal{P}}
\newmathnoun{\KK}{\mathbb{K}}

\newmathnoun{\BV}{\mathtt{BV}}
\newmathnoun{\BVN}{\mathtt{BV}_{N}}
\newmathnoun{\BVNoN}{\mathtt{BV}}
\newmathnoun{\BVB}{\mathtt{BV}_{B}}
\newmathnoun{\FFp}{\mathbb{F}_P}
\newmathnoun{\FFq}{\mathbb{F}_q}
\newmathnoun{\LC}{\mathsf{LC}}

\newmathnoun{\Expr}{\mathtt{Expr}}
\newmathnoun{\Bool}{\mathtt{Bool}}

\newcommand{\bv}{\ensuremath{\mathit{bv}}\xspace}
\newcommand{\fe}{\ensuremath{\mathit{e}}\xspace}
\newcommand{\nt}{\ensuremath{\mathit{t}}\xspace}

\newmathnoun{\ffbv}{\texttt{ff2bv}}

\newmathnoun{\bvff}{\texttt{bv2ff}}

\newmathnoun{\True}{\mathtt{True}}
\newmathnoun{\False}{\mathtt{False}}

\ifanonymous
  
\else
  
\fi

\newcommand{\tabCaptionSpace}{\vspace{1ex}}

\newcommand{\gbs}{Gr\"obner bases\xspace}

\newcommand{\mvZModSub}{\texttt{mvZModSub}\xspacemm}

\newcommand{\toNat}{\texttt{toNat}}

\newcommand{\bvToNat}{\texttt{bvToNat}}

\newcommand{\toBV}[2]{\texttt{toBV}(#1,#2)}
\newcommand{\toBVN}[1]{\texttt{toBV}(N,#1)}
\newcommand{\toBVOne}[1]{\texttt{toBV}(1,#1)}
\newcommand{\toBVNoN}{\texttt{toBV}\xspace}

\newcommand{\ZMod}{\texttt{ZMod}\, $P$ \xspacemm}

\newcommand{\injNat}{\texttt{injNat}\xspacemm}

\newcommand{\injBitVec}{\texttt{injBV}\xspacemm}
\newcommand{\injBitVecLeq}{\texttt{injBVLeqHyp}\xspacemm}

\newcommand{\calcBitWidth}{\texttt{clcBVWdth}\xspacemm}

\newcommand{\mvNat}{\texttt{distNat}\xspacemm}

\newcommand{\mvNatIte}{\texttt{distNatIte}\xspacemm}

\newcommand{\mvNatSub}{\texttt{distNatSub}\xspacemm}
\newcommand{\mvNatSubOverflow}{\texttt{distNatSubOvrflw}\xspacemm}

\newcommand{\mvBV}{\texttt{distBV}\xspacemm}
\newcommand{\mvBVIte}{\texttt{distBVIte}\xspacemm}

\newcommand{\mvBVSub}{\texttt{distBVSub}\xspacemm}
\newcommand{\mvMod}{\texttt{mvMod}\xspacemm}
\newcommand{\dropMod}{\texttt{dropMod}\xspacemm}
\newcommand{\addBds}{\texttt{addBds}\xspacemm}
\newcommand{\sqrBds}{\texttt{sqrBds}\xspacemm}

\newcommand{\mvBVMod}{\texttt{distBVMod}\xspacemm}
\newcommand{\BitModEq}{\texttt{BitModEq}\xspacemm}

\newcommand{\bvdecide}{\texttt{bv\_decide}\xspacemm}

\newcommand{\introMVar}{\texttt{introPVar}\xspacemm}
\newcommand{\rmvMVar}{\texttt{rmvPVar}\xspacemm}

\newcommand{\ineqAddMul}{\texttt{leqAddMul}\xspacemm}
\newcommand{\leSub}{\texttt{leqSub}\xspacemm}

\newcommand{\ineqMod}{\texttt{leqMod}\xspacemm}
\newcommand{\ineqHyp}{\texttt{ineqHyp}\xspacemm}
\newcommand{\ineqConst}{\texttt{ineqConst}\xspacemm}
\newcommand{\leZMod}{\texttt{leqZMod}\xspacemm}
\newcommand{\geNat}{\texttt{geNat}\xspacemm}
\newcommand{\leBV}{\texttt{leqBV}\xspacemm}
\newcommand{\ineqCases}{\texttt{ineqCases}\xspacemm}
\newcommand{\ineqMux}{\texttt{ineqXOR}\xspacemm}
\newcommand{\leqIf}{\texttt{leqIf}\xspacemm}

\newcommand{\decide}{\texttt{eval}\xspacemm}
\newcommand{\eval}{\texttt{eval}\xspacemm}
\newcommand{\OnlyVars}{\texttt{twoOrigVars}\xspacemm}

\newcommand{\HasVars}
{\texttt{origOnly}\xspacemm}

\newcommand{\HasSub}{\texttt{hasSub}\xspacemm}

\newcommand{\toBitVec}{\texttt{to\_}\BVNoN}
\newcommand{\toNN}{\texttt{to\_}\NN}

\newmathnoun{\ops}{\mathcal{O}}
\newmathnoun{\vars}{\mathcal{V}}
\newmathnoun{\expps}{\mathcal{E}}

\newmathnoun{\ogVars}{\mathtt{origVars}}
\newmathnoun{\pVars}{\mathtt{pVars}}

\begin{document}
\title{
  Automating Bitvector and Finite Field Equivalence Proofs in Lean
}
\titlerunning{
Automating Bitvector and Finite Field  Equivalence Proofs in Lean
}


\ifanonymous
  \author{}
  \institute{}
\else
  \author{%
    Elizaveta Pertseva\inst{1} \and
    Valentin Robert\inst{2} \and
    Clark Barrett\inst{1} \and
    James Parker\inst{2}
  }
  \institute{%
     Stanford University (\email{pertseva@stanford.edu})
     \and
     Galois
  }
  \authorrunning{%
    E. Pertseva et al.
  }
\fi

\maketitle
\vspace*{-1.1cm}
\begin{abstract}
%

Efforts to verify Zero-Knowledge Proof circuit encodings have highlighted the challenge of
proving the correctness of quantifier-free statements
that make use of both bitvector and finite field operations. Existing verification workflows are either manual or rely on SMT solvers, which scale poorly on some classes of problems for reasons that include difficulties with conversion operators and challenges reasoning about inequalities. To address these limitations, we present a novel Lean tactic \BitModEq that leverages range lemmas and case analysis to produce verified translations from finite fields to bitvectors. Our approach, combined with bit-blasting, outperforms state-of-the-art SMT solvers, solving 19\% more ZKP arithmetization benchmarks.

\end{abstract}
\vspace*{-.4cm}

\margincommentprimer

\section{Introduction}%
\label{sec:intro}

\JP{Abstract + Intro + Overview are frozen. Leave comments if necessary}Recent advances in Zero-Knowledge Proofs (ZKPs) have led to the utilization of ZKPs in critical applications such as cryptocurrencies~\cite{hopwood2016zcash}, post-quantum signature schemes~\cite{baum2023faest}, voting systems~\cite{miao2023secure}, proofs of vulnerabilities~\cite{cuellar2023cheesecloth,cuellar2025cheesecloth}, and image authentication~\cite{datta2025veritas}.  
ZKPs are cryptographic algorithms that enable one party (the prover) to convince another party (the verifier) that some secret data satisfies a property $\phi$ without revealing any information about the data. 

ZKP statements use the language of finite fields, where operations such as addition and multiplication are performed modulo a prime.  
However, real-world systems operate over concrete machine integers, using arithmetic modulo some power of two, also known as the theory of \emph{bitvectors}. 
As a result, ZKP statements often require arithmetizations which encode bitvector operations as finite field operations.  
The correctness of these arithmetizations is crucial for the soundness of ZKP systems, but constructing them can be error-prone. 

Prior work on verifying ZKP arithmetizations has followed two main approaches: manual and automatic.  
The manual approach typically works by embedding entire ZKP systems in interactive theorem provers (ITPs), as seen in formalizations of SP1 Hypercube and Cairo \cite{succinct2025, avigad2022verified}.  
While these efforts provide strong correctness guarantees, they require substantial human effort.  
Further, because ZKP optimizations evolve rapidly, arithmetizations frequently change, requiring significant portions of the verification effort to be revisited.

The automatic approach reduces verification to solving Satisfiability Modulo Theories (SMT) queries \cite{ozdemir2025bounded, stephens2025automated}.  
While this approach enables greater automation, it poses two fundamental challenges 
that limit what can be done automatically.
First, arithmetizations simultaneously involve bitvectors and finite fields, requiring the use of theory conversion operators that create performance problems for SMT solvers.
Second, SMT solvers contain a much larger trusted code base (TCB) than ITPs, which increases the probability of missed bugs.\footnote{Work on emitting proof certificates from SMT solvers is one approach for addressing this~\cite{BBC+23}, but proof support for the theory of finite fields remains ongoing research.}

To address these challenges, we present \BitModEq, a Lean tactic that translates finite field operations to bitvectors and solves the resulting goals via bit-blasting~\cite{bvdecide}.  
\BitModEq retains the guarantees of ITPs and outperforms existing automatic methods.  
Our approach is guided by two main insights.  

First, the majority of ZKP arithmetizations do not inherently rely on field-specific behavior (e.g., inverses or field division).  
Instead, they simulate bit-level operations inside the field, producing large numbers of inequality constraints that overwhelm the finite field reasoning engines.
To remedy this, we propose automatically translating field elements back to bitvectors during verification, enabling efficient reasoning over the original bit-level semantics. 

Our second insight is that the effectiveness of the translation primarily relies on range analysis.  
With tighter bounds, we can more frequently eliminate modulo operations, creating bitvectors of smaller widths.  
While Lean already has built-in lemmas for reasoning about equation ranges, they do not consider variable dependencies, which are crucial in bitvector operations simulating arithmetic.  
To remedy this, we design a Lean-specific range analysis that mixes Lean lemmas with targeted case reasoning.
Our main contributions are as follows: 
\begin{enumerate}
    \item A  Lean tactic \BitModEq that translates from finite fields to bitvectors. 
    \item A proof-of-concept bounded verification of two real-world ZKP encodings that increases the bounds achieved by previous techniques. 
    \item A Lean range analysis tactic for proving goals resulting from the translation.
\end{enumerate}

The remainder of the paper is organized as follows.  We first present a motivating example (Section~\ref{sec:motivate}) and relevant background (Section~\ref{sec:bg}).  We then describe our translation algorithm (Section~\ref{sec:translation}), range analysis tactic (Section~\ref{sec:range_analysis}), and Lean implementation (Section~\ref{sec:implementation}). We conclude with an evaluation (Section~\ref{sec:evaluation}), related work (Section~\ref{sec:relwork}), and a discussion of future work (Section~\ref{sec:discuss}).
\section{Motivating Example}
\label{sec:motivate}
\begin{table}[t]
\centering
\setlength{\tabcolsep}{3pt}
\begin{tabular}{cc||cc||c}
 
  $x_0$ &   $y_0$   & $x_0+y_0$  & $x_0\cdot y_0$   & $2^0 \cdot (x_0+y_0 - x_0\cdot y_0)$    \\
    \bottomrule 
  0 & 0 & 0 & 0 & 0 
  \\
  1 & 0& 1& 0 & 1
  \\ 0 &1  & 1 & 0 & 1  \\
  1 & 1 & 2 & 1 & 1
  \end{tabular}
  \tabCaptionSpace
  \caption{Evaluation table for the finite field polynomial encoding \eqref{eq:bvor_poly} of \texttt{bvor} (bitwise \texttt{OR}) in the Jolt zkVM for 1-bit inputs \cite{arun2024jolt}.}
  \label{fig:bit_split}
  \vspace{-0.3in}
\end{table}
Consider the polynomial shown in \eqref{eq:bvor_poly} over a finite field of order $P$, where $P$ is larger than $2^B$. The polynomial is an encoding of the \texttt{bvor} (bitvector \texttt{OR}) operation
used in the Jolt zkVM~\cite{arun2024jolt}.
In this encoding, $x_i$ represents the $i$-th bit of one operand, and $y_i$ represents the $i$-th bit of the other. We show all possible values for the polynomial and its subexpressions for the case $B=1$ in Table~\ref{fig:bit_split}.

\begin{equation}
\label{eq:bvor_poly}
\sum_{i=0}^{B-1} 2^i \bigl( x_i + y_i - x_i \cdot y_i \bigr)
\end{equation}
To verify that~\eqref{eq:bvor_poly} correctly encodes the \texttt{bvor} operation, we must encode the relationship between the bitvector operation and the finite field polynomial~\eqref{eq:bvor_poly}. Inspired by prior work~\cite{ozdemir2025bounded}, we rely on a predicate $\ffbv(\bv, \fe)$ (read as $\bv$ is the bitvector encoding of the finite field element $\fe)$ that encodes an equivalence between a bitvector $\bv$ of bit width $N$ and a finite field term $\fe$. While the exact definition of $\ffbv$ varies across systems~\cite{circ, arun2024jolt, succinct2025}, it most often includes a \emph{range check}. In Jolt, $\ffbv(\bv, \fe)$ is defined if and only if the value of $\fe$ is  smaller than the target bitvector size, $2^N$; otherwise, the operation is undefined.\footnote{Other systems may apply a modulo operation or return zero when  $\fe$ is too large.}  Let $\bv_1, \bv_2$ be bitvectors of bit width $B$, and let $x_i, y_i$ be finite field elements from field $P$, for $i\in\{0,\dots,B-1\}$. Our full verification query becomes:
\refstepcounter{equation}
\begin{prooftree}
  \AxiomC{$\bigwedge_{i \in \{0...B-1\}} \;  
 \ffbv(\bv_1[i],x_i)  
  \;\qquad\; 
  \bigwedge_{i \in \{0...B-1\}} \;  
  \ffbv( \bv_2[i],y_i)$}
 \RightLabel{\hspace{.7cm}(2)}
\alwaysSingleLine
  \UnaryInfC{$  
 \ffbv(\textbf{bvor} \; \bv_1 \;  \bv_2, \sum_{i=0}^{B-1} 2^i(x_i + y_i - x_i \cdot y_i))$} 
  \label{eq:corror1}
\end{prooftree}

One of the main focuses of this paper is the question of how to efficiently embed the $\ffbv$ operator. Previous SMT-based approaches are bottlenecked by the \ffbv embedding  for two main reasons. First, since SMT solvers do not support direct conversion from finite fields to bitvectors, the encoding must use bit splitting, which adds a large number of variables. Second, since current SMT theories of finite fields do not support inequalities directly \cite{hader2024smt}, the SMT encoding must use a finite field polynomial $f(\fe) = \prod_{i=0}^{2^N-1}(\fe-i)$ for range checks,\footnote{$f(\fe)$ is zero on and only on $\{0,\dots,2^N-1\}$ and thus fixes the possible values of $\fe$ to the interval $\{0,\dots,2^N-1\}$~\cite{ozdemir2025bounded}.} which slows down the solver.
The full SMT encoding of \ffbv is as follows:
\begin{equation}
\label{eq:ffbv_smt}
\begin{aligned}
\ffbv_{\mathtt{SMT}}(\bv, \fe) &
\iff\;
\prod_{i=0}^{2^N-1} (\fe-i) = 0
\;\land\;
\exists\: \fe_0 \cdots \fe_{N-1}. \,
\fe = \sum_{i=0}^{N-1} 2^i \fe_i \\
&\land\;
\bigwedge_{i=0}^{N-1} \fe_i(\fe_i - 1) = 0
\;\land\;
\bigwedge_{i=0}^{N-1}
\bv[i]
=
\texttt{ite}(\fe_i = 0,\,
\mathtt{0}_{[1]},\,
\mathtt{1}_{[1]}),
\end{aligned}
\end{equation}
where $c_{[B]}$ denotes a bitvector constant with value $c$ and bit width $B$, and $2^N$ is smaller than the order of the finite field, as is the case for Jolt.  A weaker SMT encoding can omit the range check polynomial, assuming instead that no overflow occurs during the conversion.  If this assumption is correct, bit splitting already constrains the values of $\fe$ appropriately. We refer to this as the \emph{weak encoding}, as it relies on an unproved assumption. However, even the weak version of \ffbv remains challenging for SMT solvers, as bit splitting is still present.

In contrast, ITPs have richer semantics than SMT solvers, enabling direct conversion between different types, including from finite fields to natural numbers and from natural numbers to bitvectors. Inspired by manual proofs \cite{succinct2025,avigad2022verified}, we rely on a $\toNat(\fe)$
function  (present in most ITPs \cite{lean}) that maps a finite field term to its underlying natural number value. Then, the $f(\fe)$ polynomial becomes a simple range check over the natural numbers: \toNat(\fe)  $\leq 2^N -1$. Similarly, instead of bit splitting, we can use $\toBV{N}{\nt}$, a function that converts a natural number term $\nt$ to a bitvector  of size $N$. Our \ffbv encoding is thus:
\begin{equation}
\ffbv_{\mathtt{ITP}}(\bv, \fe) \iff\; \toNat(\fe)<2^N \land \bv = \toBV{N}{\toNat(\fe)}.
\end{equation}
Decoding \ffbv in~\eqref{eq:corror1} yields:
\refstepcounter{equation}\label{eq:corror2}
\def\defaultHypSeparation{\hskip .01in}
\begin{prooftree}
\AxiomC{$
\begin{array}{@{}l@{}}
\hspace{.37cm}\bigwedge_{i \in \{0,\ldots,B-1\}}\;
\mathtt{bv1}[i] = \toBVOne{\toNat(x_i)}
\;\land\;
\toNat(x_i) \le 1
\\
\hspace{.37cm}\bigwedge_{i \in \{0,\ldots,B-1\}}\;
\mathtt{bv2}[i] = \toBVOne{\toNat(y_i)}
\;\land\;
\toNat(y_i) \le 1
\end{array}
$}
\alwaysSingleLine
\RightLabel{\hspace{.7cm}(\theequation)}
\UnaryInfC{$
\begin{array}{@{}r@{\;}c@{\;}l@{}}
(\text{\textbf{bvor}}\, \mathtt{bv1}\, \mathtt{bv2}) & = &
\toBVNoN\!\Bigl(B,\,
\toNat\!\Bigl(\sum_{i=0}^{B-1} 2^i (x_i + y_i - x_i \cdot y_i)\Bigr)\Bigr)
\\[2pt]
& \land &
\toNat\!\Bigl(\sum_{i=0}^{B-1} 2^i (x_i + y_i - x_i \cdot y_i)\Bigr)
\le 2^B - 1
\end{array}
$}
\end{prooftree}
While more compact, the ITP encoding raises the following question:
to discharge the two goals, how can one relate the conversions appearing in the goal to those applied to variables in the premises?

Our solution is an algorithm that soundly distributes the operators $\toNat$ and $\toBVNoN$ over finite field and natural number operations using range analysis. 
On the Jolt \texttt{OR} example, our approach, combined with bit-blasting, is able to automatically verify inputs with widths of up to 32 bits within five minutes and 8GB of memory.
Under the same limits, the \texttt{cvc5} SMT solver only verifies inputs with widths up to 9 bits under the weak encoding and  up to 3 under the full encoding.






 










\section{Background}%
\label{sec:bg}
In this section, we introduce the notation and logical frameworks used throughout the paper. We first describe the different arithmetic domains we will reason about~\cite{mceliece2012finite, dummit_and_foot}; we then discuss the logical setting we use~\cite{enderton2001mathematical}; and we conclude with an overview of Lean \cite{lean} and some existing Lean automation techniques~\cite{barrett2018satisfiability,  mohamed2025lean, bvdecide}. More details can be found in the cited work.

\subsection{Arithmetic Domains}
 We reason about expressions over three arithmetic domains: natural numbers, finite
fields, and bitvectors. We write $\NN$ for the natural numbers (including 0) and
$\NNOne$ for strictly positive naturals. 
In $\NN$, subtraction is truncated:
for $A,B \in \NN$, the term $A-B$ is defined as usual unless $B>A$, in which case it is $0$.
We use $ [i,j] $ to represent the set $ \{n\in\NN \mid i\le n \le j\} $ and $[n]$ to abbreviate $[1,n]$.

We use $\FFp$, with $P\in\NNOne$ prime, to denote a finite field of order $P$ (we consider only prime-ordered fields). The field $\FFp$ is isomorphic to the natural numbers modulo $P$. We use the function $\toNat : \FFp \rightarrow \NN$ to return the 
representative in \NN of an element of \FFp. 

We use $\BVN$ with $N\in \NNOne$ to denote a bitvector of width $N$. Bitvectors can be viewed as finite sequences of bits, and we write $b[i]$ to denote the $i$-th bit of a bitvector $b$. 
Bitvectors have a canonical natural number representation in the range $[0,2^N-1]$. We use the
function  $\bvToNat : \BVN \rightarrow \NN$ to map a bitvector to its natural number representation 
and the function $\toBVNoN : \NN \times \NN \rightarrow \BVN$ to map a natural number to its
$N$-bit encoding.

\subsection{Logical Setting \&  Notation }
\begin{table}[t]
\centering
\setlength{\tabcolsep}{5pt}
\renewcommand{\arraystretch}{1.25}

\begin{tabular}{ccc}
\toprule
Signature & Sort & Set of Symbols \\
\midrule

$\Sigma_{\FFp}$ & $\FFp$ &
$\{\cdot_{\FFp}, +_{\FFp}, -_{\FFp}\}$ \\

$\Sigma_{\NN}$ & $\NN$ &
$\{\cdot_{\NN}, +_{\NN}, -_{\NN}, \bmod_{\NN},
\leq_{\NN}, \geq_{\NN}\}$ \\


$\Sigma_{\BV}$ & $\BV_N$ &
\makecell[c]{$\{\cdot_{[N]}, +_{[N]}, -_{[N]}, \bmod_{[N]}
\leq_{[N]}, \geq_{[N]}\}$} \\
$\Sigma_{\FFp} \cup \Sigma_{\NN}$ & $\FFp, \NN$ &
$\{\toNat\}$ \\

$\Sigma_{\NN} \cup \Sigma_{\BV}$ & $\NN,  \BV_N$ &
$\{\toBVNoN_{[N]}, \bvToNat_{[N]}\}$ \\

\bottomrule
\end{tabular}

\vspace{0.25cm}
\caption{Relevant symbols from the theories referenced in this paper. For $\Sigma_{\BV}$, symbols are parameterized by $N \in \NNOne$, which represents the bit width.}
\label{tab:signature}
\vspace{-1cm}
\end{table}
\paragraph{Logic} We work in the logical setting
of many-sorted quantifier-free first-order logic with equality, using standard
terminology throughout.
We use $\Sigma$ to denote a many-sorted signature with a finite set of sorts,
including $\Bool$.  We also assume a family of equality symbols
$=_\sigma : \sigma \times \sigma \to \Bool$ for each $\sigma \in \Sigma$.
We assume the usual definitions of well-sorted terms and formulas (terms of sort $\Bool$).  We use $v,w$ to denote variables, $t$ to denote an arbitrary terms, and $\gamma$ to denote an arbitrary formula.
A theory consists of a signature together with a fixed interpretation of some subset of its
symbols.

We consider three base theories, each with a corresponding signature. The most relevant symbols are shown in Table~\ref{tab:signature}.
The finite field signature $\Sigma_{\FFp}$ includes the finite field sort, which we also denote $\FFp$, and the standard
finite field operators~\cite{hader2024smt}.
The natural number signature $\Sigma_{\NN}$ includes the natural number sort, which we also denote $\NN$, and the usual arithmetic
operators.
The bitvector signature $\Sigma_{\BV}$ includes, for each $N\in\NNOne$, the bitvector sort for bitvectors of size $N$, which we also denote $\BVN$ and the
SMT-LIB bitvector operators~\cite{smtlibbv}.
For all signatures, when the intended sorts of operators are clear from context, we 
write operators without explicit sort annotations.

We allow theory combination in the standard way.  When working in the combined theory, we additionally assume the signature is extended with three \emph{conversion operators} matching the three conversion functions ($\toNat$, $\toBVNoN$, and $\bvToNat$) mentioned above.  For convenience, we refer to both the functions themselves and the operators with the same notation.

\paragraph{Notation}
For a formula $\gamma$ and terms $s$ and $t$, we write
$s \preceq \gamma$ to denote that $s$ occurs as a subterm of
$\gamma$.
We write $\gamma[s/t]$ for the formula obtained by replacing
all occurrences of $s$ in $\gamma$ with $t$. We lift this notation pointwise to sets of formulas and tuples of sets of formulas.
For a set $S$ of formulas, $S[s/t]$ denotes the set obtained by
replacing all occurrences of $s$ in $S$ with $t$.
Similarly, for a tuple $T = (S_1, S_2)$ of sets of formulas, we define
$T[s/t] = (S_1[s/t], S_2[s/t])$. We also lift set membership to tuples of sets, using $\gamma\in T$ to denote $\gamma \in S_1 \cup S_2 $. We write $S \setminus \gamma$ to abbreviate
set $S \setminus \{\gamma\}$ and $S,\gamma$ to abbreviate $S\cup\{\gamma\}$. Finally, we write $S := S'$ to denote assignment of $S$ to
$S'$ and $S \models \gamma$ to indicate that $\gamma$ is logically implied by $S$.

\subsection{Lean \& SMT }
\paragraph{Lean} Lean~\cite{lean} is an interactive theorem prover (ITP) that allows users to
\emph{mechanically} verify mathematical theorems, including those about properties of software.
Lean provides a tactic language that enables users to interactively construct
proofs by transforming proof states, which consist of a goal together with a
local hypothesis context.
Lean's type checker 
automatically 
verifies proofs for users and throws errors during compilation if a proof is not valid.
This provides high confidence that the proof is correct, as it shrinks the trusted computing base (TCB) to Lean's small trusted kernel.
Lean is based on higher-order logic and is therefore more expressive than the
first-order setting described above. 
Our tactic operates on Lean goals but reasons in the restricted first-order setting, which is typically sufficient for ZKP arithmetizations. We thus use the notation of many-sorted first-order logic in this paper, with the understanding that we are working with a corresponding logical fragment within Lean.  

\paragraph{SMT}
 Satisfiability Modulo Theories (SMT) solvers allow users to \emph{automatically} verify theorems expressed in the language of many-sorted first-order logic with theories.  Some commonly supported theories include uninterpreted functions, integer and real arithmetic, and arrays and bitvectors~\cite{barrett2018satisfiability}.
SMT solvers are more automated but less expressive than ITPs.
To solve queries over multiple theories, SMT solvers employ dedicated theory solvers and theory combination mechanisms such as the Nelson-Oppen framework~\cite{nelson1979simplification}.
Since each theory solver operates independently, conversion operators are typically challenging to support and expensive for the solvers to reason about.

There are two main approaches for integrating SMT-like reasoning with Lean.
The first involves translating Lean proof
obligations into SMT queries, sending them to an external SMT solver, and then reconstructing the
proof in Lean to convince the kernel of the result \cite{mohamed2025lean}.
The reconstruction approach only supports a subset of SMT theories, not including (at the time of writing) finite fields and bitvectors. 
The second approach is to
implement SMT-like decision procedures directly in Lean, which keeps the reasoning inside
the proof assistant but may lead to increased solving times due to the need to perform the proof search within Lean.
In this paper, we follow the second approach and rely on \texttt{bv\_decide}, a
bit-blasting decision procedure implemented in Lean \cite{bvdecide}.

\section{BitModEq}%
\label{sec:tactic}
\begin{figure}[t]
\centering
\resizebox{\textwidth}{!}{%
\begin{tikzpicture}[
  type/.style={draw, rounded corners, dashed,minimum height=6mm, minimum width=7mm, align=center},
  proc/.style={draw, rounded corners, thick,  minimum height=6mm, minimum width=15mm, align=center},
  decision/.style={draw, rounded corners, minimum height=6mm, minimum width=5mm, align=center},
  arrow/.style={->, thick},
  ineq_1/.style={->, thick, dotted},
 ineq_2/.style={->, thick, dotted},
 ineq_3/.style={->, red},
  trl/.style={->, thick},
  node distance=1.0cm
]
\node[type] (FF) {$\Sigma_{\FFp} \cup \Sigma_{\mathbb{N}} \cup \Sigma_{\BV} $};
\node[proc, right=1.25 cm of FF] (TR1) {\texttt{to\_}$\NN$};
\node[type, right=1.25 cm of TR1] (N) {$\Sigma_\mathbb{N} \cup \Sigma_{\BV}$};

\node[proc, right=1.25 cm of N] (TR2) {\texttt{to\_}$\BV$};
\node[type, right=1.25 cm of TR2] (BV) {$\Sigma_{\BV}$};

\node[proc, below=1.2cm of FF] (RA0) {\texttt{rng\_analyze}};

\node[proc, below=3cm of TR1] (RA) {\texttt{rng\_analyze}};

\node[proc, below=3cm of TR2] (RA1) {\texttt{rng\_analyze}};
\node[proc, right=2cm of BV] (BVD) {\texttt{bit\_blast}};
\node[decision, below=3cm of BVD,  fill=green!70!black!70] (VALID) {\texttt{valid}~\ding{51}};
\node[decision, left=.9cm of VALID, fill=gray!30!] (CE) {\texttt{unknown \ding{55}}};
\node[decision, left=1.3cm of RA,  fill=gray!30!] (UNK) {\texttt{unknown} \ding{55} };

\draw[trl] (FF) -- (TR1);
\draw[trl] (TR1) -- (N);
\draw[trl] (N) -- (TR2);
\draw[trl] (TR2) -- (BV);
\draw[arrow] (BV) -- (BVD);

\draw[arrow] (BVD) -- (VALID);
\draw[arrow] (BVD) -- (CE);

\draw[ineq_1, loop above]
  (TR1) to
  node{\scriptsize $\Sigma_{\FFp} \cup \Sigma_\mathbb{N}$ ineqs. from \texttt{to\_}$\NN$}
  (TR1);

\draw[ineq_1]
  (TR1.south) to[bend right=50]
  node[pos=0.5, left]{\scriptsize $\Sigma_{\mathbb{N}}$ ineqs.}
  (RA.north);

\draw[ineq_3, bend right=50]
(RA.north) to
  node[midway, right]{\scriptsize diff. rule}
 (TR1.south);
 
\draw[ineq_2]
  (TR2.south) to[bend right=50]
  node[pos=0.5, left]{\scriptsize $\Sigma_{\mathbb{N}}$ ineqs. }
  (RA1.north);

\draw[ineq_3, bend right=40]
(RA1.north) to
  node[midway, right]{\scriptsize diff. rule}
 (TR2.south);
\draw[arrow] (RA0.south) --
 node[right]{\scriptsize
   \parbox{1cm}{\centering $\exists$ unsolved\\ineqs. }
 }
 (UNK.north);

\draw[arrow] (TR1.west) --
  node[pos=0.6, xshift=-25pt, align=center]
  {\scriptsize
    \parbox{1.8cm}{\centering
      $\Sigma_{\mathbb{N}}$ ineqs.\ from goal
    }
  }
(RA0.north);



\end{tikzpicture}
}
\caption{\BitModEq workflow. The tactic accepts terms over $\Sigma_{\FFp}$ and $\Sigma_{\BV}$, and terms originating from $\toNat$ and $\bvToNat$ operators. It first converts $\FFp$ terms to $\NN$ terms and then converts $\NN$ terms to $\BVN$ terms. Inequalities are discharged by \texttt{rng\_analyze}. If \texttt{rng\_analyze} cannot prove an inequality from the original goal, the tactic returns \texttt{unknown}; if it cannot prove a premise from a translation rule, a different rule is applied. 
Once only $\Sigma_{\BV}$ goals and hypotheses remain, the goals are discharged via bit-blasting. \texttt{valid} is returned if and only if bit-blasting solves the problem and no inequalities remain.}
\label{fig:workflow}
\vspace{-.4cm}
\end{figure}

In this section, we describe the \BitModEq algorithm. We begin with a high-level overview and then examine its core elements: conversion and range analysis.
\subsection{Overview}
Figure~\ref{fig:workflow} illustrates the \BitModEq workflow, which
operates in four phases:
\begin{enumerate}
\item Convert all terms of sort $\FFp$ in the proof context to terms of sort $\NN$.
\item Discharge any inequalities over $\NN$ in the original goal using range analysis.
\item Convert any remaining terms of sort $\NN$ to bitvector terms.
\item Discharge the final bitvector equality goal via the \texttt{bv\_decide} tactic.
\end{enumerate}
Since finite field elements do not admit a canonical bitvector representation,
we must select an intermediary domain.
Two natural candidates are the integers and the natural numbers.
We choose the natural numbers for two reasons.
Primarily, in the majority of the ZKP arithmetizations, finite field variables are used to simulate
bits and therefore do not represent negative values.
Second, converting integers to bitvectors requires reasoning about both lower
and upper bounds, whereas conversion from natural numbers requires reasoning
only about upper bounds, eliminating a significant portion of the proof obligations.

Both conversion passes, steps (1) and (3), traverse the input term from the root and push conversion
operators through arithmetic operators toward variables and constants.
If we can show that no modular wraparound is possible for an operand, we do not have to account for overflow or underflow and thus end up with much simpler resulting terms. 

For example, consider the \texttt{bvor} example from 
\Cref{sec:motivate}.
For $B=1$, when we are pushing \toNat{} inward in $\toNat(x_0 +y_0 - x_0\cdot{}y_0)$, if we can show that 
\begin{equation}
\toNat( x_0\cdot{}y_0) \leq \toNat(x_0 +y_0),
\label{eq:natineq}
\end{equation}
we know the expression won't become negative (which would result in wraparound) and thus can soundly drop the modulus, executing the transformation:  
\begin{equation}
\toNat(x_0 +y_0 - x_0\cdot y_0) \rightarrow \toNat(x_0 +y_0) - \toNat( x_0\cdot y_0).
\end{equation}
%
If we cannot prove Equation~\eqref{eq:natineq}, we have to account for possible wraparound, and our transformation becomes: 
\begin{equation} \toNat(x_0 +y_0 - x_0\cdot y_0) \rightarrow(\toNat(x_0 +y_0) + P - \toNat( x_0\cdot y_0))  \bmod P,
\end{equation}
which results in much larger bit widths that take longer to bit-blast. Therefore, we use different translation transformations (Section~\ref{sec:translation}) for cases where modular
wraparound must be accounted for and cases where it can be ruled out.
To prove the absence of wraparound, we rely on range analysis.

A natural design choice would have been to rely on Lean’s existing automation,
such as \texttt{ring\_nf}, \texttt{omega}, or \texttt{aesop}, to discharge
inequality goals.
However, initial experiments showed that these tactics are often overwhelmed by
large coefficients or struggle to handle variable dependencies and bound
information arising in ZKP arithmetizations.
We therefore design a dedicated range analysis algorithm.

Similar to the conversion passes, our range analysis, described in (Section~\ref{sec:range_analysis}),  is syntax-directed: it traverses
the formula from the root toward variables and constants, introducing symbolic
bounds for intermediate subterms.
Leaf goals are discharged using bounds from the hypothesis context,\footnote{All variables have bounds as they are either finite field or bitvector variables.} and
the resulting information is propagated back toward the root.
We additionally incorporate specialized patterns that capture common
bit-specific variable dependencies. While incomplete, our range analysis is largely independent of numeric magnitude
and can be more efficient than bit-blasting for large bit widths.
Thus, we also use range analysis to discharge $\Sigma_{\NN}$-inequalities
present in the original goals in step~(2).

The theorem is proven once range analysis discharges all inequalities and bit-blasting discharges the final bitvector equality. Otherwise, we return \texttt{unknown}.



\subsection{Translating from Finite Fields to Bitvectors}
\label{sec:translation}
\begin{figure}[t!]
 \fbox{%
  \begin{minipage}{0.97\textwidth}

    \centering
 \begin{center}
 \def\defaultHypSeparation{\hskip .01in}
  \AxiomC{  $t_1,t_2,t_3 : \FFp$}
   \AxiomC{$ \gamma \in \Gamma$}
   \AxiomC{$t_1-t_2+t_3 \preceq \gamma $}
  \LeftLabel{\mvZModSub}
  \TrinaryInfC{$\Gamma:= \Gamma[t_1-t_2+t_3/ (t_1+t_3) - t_2]$}
  \DisplayProof
\end{center}
\begin{center}
\def\defaultHypSeparation{\hskip .01in}

  \AxiomC{$t_1,t_2: \FFp$}
  \AxiomC{$t_1=t_2\in \Gamma$}
  \LeftLabel{\injNat}
  \BinaryInfC{$\Gamma:= \Gamma[ t_1=t_2 /\toNat(t_1) = \toNat(t_2)]$}
  \DisplayProof
\end{center}

\begin{center}
\def\defaultHypSeparation{\hskip .01in}
  \AxiomC{$t_1,t_2 : \FFp $ \hspace{.5em } $\gamma \in \Gamma$ \hspace{.5em} $\bowtie\ \in \{+, \cdot\}$  \hspace{.5em} $\toNat(t_1 \bowtie_{\FFp} t_2) \preceq \gamma $}
   \alwaysSingleLine
  \LeftLabel{\mvNat}
  \UnaryInfC{$ \Gamma:= \Gamma [ \toNat(t_1 \bowtie_{\FFp} t_2) / (\toNat(t_1)\bowtie_{\NN}  \toNat(t_2) ) \bmod P] $}
\DisplayProof
\end{center}
\begin{center}
\def\defaultHypSeparation{\hskip .01in}
  \AxiomC{ $t_1,t_2: \FFp$ \hspace{.5em} $t_3 : \mathtt{Bool}$ \hspace{.5em} $\gamma \in \Gamma$ \hspace{.5em} $\toNat(\mathtt{ite}( t_3,t_1, t_2))\preceq \gamma$  }
  \LeftLabel{\shortstack{\texttt{dist}\\\texttt{NatIte}}}
  \alwaysSingleLine
  \UnaryInfC{$
\Gamma:= \Gamma[
\toNat(\mathtt{ite}(t_3,t_1, t_2) / 
\mathtt{ite}(t_3, \toNat(t_1), \toNat( t_2))]
$}
\DisplayProof
\end{center}
\begin{center}
\def\defaultHypSeparation{\hskip .01in}
  \AxiomC{ $t_1,t_2: \FFp$  \hspace{.2em} $\gamma \in \Gamma$ \hspace{.2em} $\toNat(t_1- t_2) \preceq \gamma$ \hspace{.2em} $H \models \toNat(t_1) \geq \toNat(t_2)$  }
  \LeftLabel{\shortstack{\texttt{dist}\\\texttt{NatSub}}}
  \alwaysSingleLine
  \UnaryInfC{$\Gamma:= \Gamma[\toNat(t_1 - t_2) / \toNat(t_1)- \toNat(t_2)]$}
\DisplayProof
\end{center}
\begin{center}
\def\defaultHypSeparation{\hskip .01in}
  \AxiomC{ \hspace{.5em}$t_1,t_2 : \FFp$ \hspace{.5em} $\gamma \in \Gamma$}
  \alwaysNoLine
   \AxiomC{$\toNat(t_1- t_2) \preceq \gamma$}
\alwaysSingleLine
  \LeftLabel{\shortstack{\texttt{distNat}\\\texttt{SubOvrflw}}}
  \BinaryInfC{$\Gamma:= \Gamma[ \toNat(t_1  - t_2) / (\toNat(t_1) + P - \toNat(t_2) ) \bmod P ]$}
\DisplayProof
\end{center}
\begin{center}
\def\defaultHypSeparation{\hskip .01in}
  \AxiomC{
  $v_1 : \FFp$ \hspace{.3em}
  $v_1 \preceq \gamma$ \hspace{.3em} $\gamma\in\Gamma$ \hspace{.3em} $\toNat(v_1) \leq N \notin H $ for $N\in[0,P-1]$}
     \alwaysSingleLine
  \LeftLabel{\addBds}
  \UnaryInfC{$H:= H \cup \{ \toNat(v_1) \leq P-1 \})$}
\DisplayProof
\end{center}
\begin{center}
\def\defaultHypSeparation{\hskip .01in}
  \AxiomC{$t_1,t_2,t_3 : \NN$}
  \AxiomC{$\gamma \in \Gamma$}
   \AxiomC{$\bowtie \; \in \{+,\cdot\} $}
   \AxiomC{$(t_1  \bmod t_3 \bowtie t_2 \bmod t_3 ) \bmod t_3 \preceq \gamma$}
   
  \LeftLabel{\mvMod}
  \QuaternaryInfC{$\Gamma:= \Gamma[(t_1 \bmod t_3 \bowtie t_2 \bmod t_3) \bmod t_3 /  (t_1 \bowtie t_2) \bmod t_3$]}
\DisplayProof
\end{center}

\begin{center}
\def\defaultHypSeparation{\hskip .01in}
  \AxiomC{ $P\in \NNOne$ \hspace{.5em} $t_1 : \NN$}
  \AxiomC{$t_1 \bmod P \preceq \gamma$}
  \AxiomC{$\gamma \in \Gamma$}
  \AxiomC{$H \models t_1 \leq P-1$}
  \LeftLabel{\dropMod}
  \QuaternaryInfC{$\Gamma:= \Gamma [t_1 \bmod P /t_1]$}
\DisplayProof
\end{center}
\begin{center}
\def\defaultHypSeparation{\hskip .01in}
  \AxiomC{$t_1 : \FFp$}
   \AxiomC{$t_1 \cdot t_1 = t_1 \in \Gamma $}
  \LeftLabel{\sqrBds}
  \BinaryInfC{$\Gamma:= \Gamma[ t_1 \cdot t_1 = t_1 / \toNat(t_1) \leq 1] $}
\DisplayProof
\end{center}
\vspace{-1em}
\end{minipage}
}
\caption{Translation rules from $\FFp$  to $\NN$.  $\Gamma = \{ G,  H\}$ denotes the  proof context, where  $G$ is the set of goals and $H$ is the global set of shared hypotheses. 
}
\label{fig:tr1}
\vspace{-.5cm}
\end{figure}


\begin{figure}[t!]
    \centering

 \fbox{%
  \begin{minipage}{0.97\textwidth}

\begin{center}
\def\defaultHypSeparation{\hskip .01in}
  \AxiomC{ $N \in \NNOne$ \hspace{.4em} $t_1,t_2 : \NN$}
   \AxiomC{$t_1 = t_2 \in \Gamma $}
   \AxiomC{$H \models t_1 \leq 2^N -1  $\hspace{.4em}  $ H \models t_2 \leq 2^N-1$ }
  \LeftLabel{\injBitVec}
  \TrinaryInfC{$\Gamma:= \Gamma[ t_1=t_2 /\toBVN{t_1} = \toBVN{t_2}$]}
\DisplayProof
\end{center}

\begin{center}
\def\defaultHypSeparation{\hskip .01in}
  \AxiomC{ $N \in \NNOne$ \hspace{.4em} $t_1,t_2 : \NN$ \hspace{.4em} $\bowtie \; \in \{\leq, \geq\}$ \hspace{.4em} 
$t_1 \bowtie_{\NN} t_2 \in H $  }
  \alwaysNoLine
  \UnaryInfC{   $\toBVN{t_1} \bowtie_{[N]}  \toBVN{t_2} \notin H $  \hspace{.3em}
$H \models t_1 \leq 2^N -1$ \hspace{.3em} $ H \models t_2\leq 2^N-1$ }
\alwaysSingleLine
  \LeftLabel{\shortstack{\texttt{injBV} \\ \texttt{LeqHyp}}}
  \UnaryInfC{$H:=  H \cup \{ \toBVN{t_1} \bowtie_{[N]}  \toBVN{t_2}\}$}
\DisplayProof
\end{center}
\begin{center}
\def\defaultHypSeparation{\hskip .01in}
  \AxiomC{ $N \in \NNOne$ \hspace{.5em} $t_1,t_2 : \NN  $ \hspace{.5em}  $\gamma\in \Gamma$  \hspace{.5em}  $ \bowtie \; \in \{+_,\cdot\}$ \hspace{.5em}
$\toBVN{t_1 \bowtie_{\NN} t_2}) \preceq \gamma $ \hspace{.5em}}
  \LeftLabel{\mvBV}
  \alwaysSingleLine
  \UnaryInfC{$\Gamma:= \Gamma[\toBVN{t_1 \bowtie_{\NN} t_2}/ \toBVN{t_1}\bowtie_{[N]} \toBVN{t_2}]$}
\DisplayProof
\end{center}
\begin{center}
\def\defaultHypSeparation{\hskip .01in}
  \AxiomC{$N \in \NNOne$ \hspace{.5em} $t_1,t_2: \NN$ \hspace{.5em} $t_3 : \mathtt{Bool}$ \hspace{.5em} $\gamma \in \Gamma$ \hspace{.5em} $\toBVN{\mathtt{ite}(t_3, t_1, t_2)}\preceq \gamma$  }
  \LeftLabel{\shortstack{\texttt{distBV} \\ \texttt{Ite}}}
  \alwaysSingleLine
  \UnaryInfC{$
\Gamma:= \Gamma[
\toBVN{\mathtt{ite}(t_3 , t_1, t_2)} / 
\mathtt{ite}(t_3, \toBVN{t_1}, \toBVN{t_2})]
$}
\DisplayProof
\end{center}
\begin{center}
\def\defaultHypSeparation{\hskip .01in}
  \AxiomC{$N \in \NNOne$ \hspace{.5em} $t_1,t_2 : \NN$}
   \AxiomC{$\gamma \in \Gamma$}
   \AxiomC{$\toBVN{t_1-t_2} \preceq \gamma$}
  
      \alwaysNoLine
   \LeftLabel{\mvBVSub}
 \TrinaryInfC{ $H \models t_1 \le 2^N -1 $\hspace{.5em}  $
    H \models \toBVN{t_1} \geq \toBVN{t_2})$}
  \alwaysSingleLine 
  \UnaryInfC{ $\Gamma := \Gamma[\toBVN{t_1 - t_2} /\toBVN{t_1} - \toBVN{t_2} ]$
}

\DisplayProof
\end{center}
\begin{center}
\def\defaultHypSeparation{\hskip .01in}
  \AxiomC{ $C,N \in \NNOne$ \hspace{.5em} $t_1 : \NN$}
  \AxiomC{$\gamma \in \Gamma$}
   \AxiomC{$ C \leq 2^N -1 $}
   \AxiomC{$H \models t_1 \leq 2^N-1$}
   
   \LeftLabel{\mvBVMod}
 \QuaternaryInfC{ $\Gamma := \Gamma[\toBVN{t_1 \bmod C}  / \toBVN{t_1} \bmod \toBVN{C}]$ }

\DisplayProof
\end{center}

  \end{minipage}
  }
\caption{Translation rules from $\NN$  to $\BVN$.}
\label{fig:tr2}
\vspace{-.5cm}
\end{figure}
In this section, we present our \emph{translation calculus}, which contains two types of rules: rules that convert from finite fields to natural numbers (Figure~\ref{fig:tr1}) and rules that convert from natural numbers to bitvectors (Figure~\ref{fig:tr2}).

We follow the common practice of presenting rules that modify \emph{configurations}~\cite{pertseva2025integer, sheng2022reasoning}. In our calculus, a configuration is a proof context $\Gamma$, represented as a tuple $(G,H)$, where $G$ is a set of goals and $H$ is a set of global hypotheses shared across all goals. When a rule modifies only $G$ or only $H$ we omit the unchanged component.  Rules are presented in \emph{guarded assignment form}, where the premises describe the configurations to which the rule can be applied, and the conclusion describes the modification to the configuration. A sequence of applications is called a \emph{derivation}. We now present the two types of rules.

\paragraph{Finite Fields to Natural Numbers}\JP{We might want to highlight interesting rules if we run out of space}
\mvZModSub groups additions before subtractions to accumulate positive values and ensure minimal underflow  cases.
\injNat replaces finite field equations with natural number equations.
\mvNat distributes the $\toNat$ operator over addition and multiplication and adds $\bmod$ for soundness. Similarly \mvNatIte distributes the operator over if/then/else.

\mvNatSub distributes the $\toNat$ operator over subtraction when no negative values are possible. The $\bmod$ operator can be soundly dropped as $\toNat(t_1)$ ensures the starting expression is less than $P$ and subtraction by another natural number will only lower the final value. 
In cases where the premises of \mvNatSub cannot be discharged, we rely on \mvNatSubOverflow, which ensures soundness by eliminating the possibility of negative values by adding $P$ and  then ensuring the result is in bounds by applying $\bmod \; P$.
\addBds adds bounds for finite field variables present in the proof context to ensure equisatisfiability of the natural number constraints.
 \mvMod simplifies the term by factoring out $\bmod$.
When the term is smaller than the modulus, we apply \dropMod to remove the modulus. 

While the majority of ZKP arithmetizations only use finite field variables to represent bits, the restriction that a variable must be $0$ or $1$ relies on finite field reasoning (finite fields' not having zero divisors), because finite fields do not naturally support comparisons or the Boolean operator $\mathsf{or}$. Thus, we add \sqrBds  to detect bit encodings and translate them to natural number inequalities.

\paragraph{Natural Numbers to Bitvectors} 
As long as both sides of the expression fit into a bitvector of width $N$ without overflow, \injBitVec replaces natural number equalities with bitvector equalities of bit width $N$. While there might be inequalities, in our goal we leave them to range analysis and do not translate them. 
For hypothesis inequalities, we use \injBitVecLeq to maintain both the natural number and bitvector representations, so the inequalities can be used by the range analysis algorithm and by bit-blasting.
\mvBV distributes the $\toBVNoN$ operator over addition and multiplication.
Unlike \mvNat there is no need to add a $\bmod$ operator, as we remain in the $\bmod $ $2^N$ space. \mvBVIte distributes the $\toBVNoN$ operator over if/then/else. \mvBVSub distributes the $\toNat$ operator over subtraction when no negative values are possible. 
Unlike \mvNatSub, we add a second check that $t_1\leq 2^N -1$ because $t_1$ is not naturally a bitvector and might overflow, while in  \mvNatSub, $t_1$ is a $\ZMod$ and is thus constrained to be less than or equal to $P-1 $. We do not include a rule for cases where negative values are possible, since this case will never arise from the finite field to natural number translation rules. Last but not least, both \dropMod and \mvBVMod handle cases where the natural number encoding contains the $\bmod$  operator. 
 In cases where we cannot prove that the term is smaller than the modulus (\dropMod does not apply) we apply $\mvBVMod$ and distribute $\toBVNoN$ as long as both the modulus and the term are smaller than 2 to the power of the bit width.


\paragraph{Strategy}
\begin{figure}[t]
 \begin{subfigure}[t]{.48\textwidth}
    
\begin{algorithm}[H]
\KwIn{Proof context $\Gamma = (G,H)$}
\KwOut{Modified $\Gamma$}
\Fn{\texttt{to\_}$\NN(\Gamma)$}{
{progress $\leftarrow$ True \\}
\While{progress}{
progress $\leftarrow$ (\sqrBds || \injNat || \mvZModSub || \mvNat|| \mvNatIte || \mvNatSub|| \mvNatSubOverflow || \mvMod || \dropMod || \addBds) \\
}

}

\end{algorithm}
\caption{Strategy \toNN}
\label{fig:toNN}
\end{subfigure}
 \begin{subfigure}[t]{.51\textwidth}

\begin{algorithm}[H]
\KwIn{Proof context $\Gamma = (G,H)$}
\KwOut{Modified $\Gamma$}
\Fn{$\toBitVec(\Gamma)$}{
\DontPrintSemicolon
\LinesNumbered

{$b \leftarrow$ 2 \\}
\For{$\gamma \in \Gamma$} {
\For{$t\preceq\gamma$}
{$b\leftarrow$ \texttt{max}(\texttt{clcBVWdth}($t$),$b$)\\ }
}

{progress $\leftarrow$ True \\}
\While{progress}{
progress $\leftarrow$ $\injBitVecLeq(b)$
}
{progress $\leftarrow$ True \\}
\While{progress}{
progress $\leftarrow$ ($\injBitVec(b)$|| \mvBV|| \mvBVIte ||  \mvBVSub|| \mvBVMod) 
}

}
\end{algorithm}

\caption{Strategy \toBitVec }
\label{fig:toBV}
\end{subfigure}

\caption{Strategies for applying translation rules from Figures~\ref{fig:tr1} and~\ref{fig:tr2}.
 }
\label{fig:main_alg}
\vspace{-.5cm}
\end{figure}
Figure~\ref{fig:main_alg} presents our strategy, shown as pseudocode in which we assume that each translation rule returns \True iff it applies and changes $\Gamma$. We also assume that  \injBitVec and \injBitVecLeq take as input the target bit width. $(s_1\parallel \dots \parallel s_k)$ is evaluated from left to right, stopping after the first \True result. We first execute calculus rules according to Strategy $\toNN$ and then Strategy $\toBitVec$. $\calcBitWidth(t)$ is a function that overapproximates the maximum possible bit width needed to ensure $t$ can be represented without overflow.\footnote{In our implementation, the bit width is calculated per variable, but we use a global bit width here to keep the presentation simple.}
\paragraph{-} \toNN The strategy consists of a while loop, where a rule is only attempted once the previous rule no longer applies. We begin with \sqrBds to translate any bit encodings to bounds in case other rules rely on them. We then apply \injNat and use \mvZModSub to minimize underflow cases. Then, while one of \mvNat, \mvNatIte, \mvNatSub, or \mvNatSubOverflow applies, we
continue pushing $\toNat$ inward. We always attempt \mvNatSub before \mvNatSubOverflow to ensure the final translation is as simple as possible. Once we reach saturation, we factor out
the $\bmod$ operations that came from distribution. Once no more $\bmod$ operations can be factored we attempt to drop all $\bmod$ operators (line~4). 

\paragraph{-}\toBitVec We begin by setting the maximum bit width $b$ for the conversion to $2$ (line~2). Then, for each hypothesis and goal we update $b$ if any subterm requires a larger bit width (line~5). We calculate the maximum bit width across all of the terms to have a higher chance of proving the goal. 
For example, let $x\in \NN$, and suppose we translate a hypothesis $x \leq 1$ to $\toBVNoN(2,x) \leq 1_{[2]}$ but then use $\toBVNoN(4,x)$ in the goal. The goal then no longer follows from the hypothesis.  A possible counterexample could include $x = 3_{[4]}$ as $ (3_{[4]})_{[2]} = 2_{[1]}$. However, the value $3$ would not have been permitted by the original hypothesis. 

Upon establishing a maximum bit width, we use it to convert all natural numbers to bitvectors. We first apply \injBitVecLeq to saturation (to avoid applying it again if the converted inequalities are later modified, line 8),
then cast goals to bitvectors using \injBitVec, and then push \toBVNoN inward until saturation (line~11). 



We state soundness and termination results below.%
\footnote{Due to space constraints, the proofs of these and later theorems have been moved to the appendix, which also includes a discussion of strategy limitations and tradeoffs.}

\begin{theorem}
    Termination: For any context $\Gamma$, no infinite derivation is possible using strategies $\toNN$ and  $\toBitVec$.
    
\end{theorem}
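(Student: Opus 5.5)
We prove termination separately for each while loop of the two strategies, using a well-founded measure on the proof context $\Gamma=(G,H)$ that strictly decreases with every rule application inside the loop. Strategy $\toNN$ runs first and then $\toBitVec$, and $\toBitVec$ is three loops in sequence, the first of which (lines 4--5) is finite because it ranges over finitely many subterms. So it suffices to show that each individual \texttt{while progress} loop terminates. We use that each rule returns \True only if it changes $\Gamma$, and that $\Gamma$ is a finite set of finite formulas.

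For $\toNN$ we use a lexicographic measure $(\mu_1,\dots,\mu_6)$:
\begin{enumerate}
\item $\mu_1$ is the number of $\Sigma_{\FFp}$ equalities $t_1=t_2$ in $\Gamma$, including those of the form $t\cdot t=t$. This is decreased by \sqrBds and \injNat, and no rule creates new $\FFp$ equalities.
\item $\mu_2$ is the total size of $\FFp$-subterms occurring under a $\toNat$ or in $\Gamma$, weighted so that each $\toNat$ applied to a non-variable term costs more than its pushed-in image. Concretely, take the sum over occurrences of $\toNat(t)$ of $|t|$, the number of $\FFp$ operator symbols in $t$. \mvNat, \mvNatIte, \mvNatSub and \mvNatSubOverflow each replace $\toNat(t_1\bowtie t_2)$ by terms whose $\toNat$ arguments have strictly smaller total operator count. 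The added $P$ and $\bmod$ are $\NN$ symbols and do not count.
\item \mvZModSub leaves $\mu_2$ unchanged, so we add a component counting occurrences of the pattern $t_1-t_2+t_3$, i.e.\ the number of pairs (subtraction node, enclosing addition with it as left child). The rewrite $(t_1+t_3)-t_2$ removes one such pair. To justify that it does not create others, we order the components so that this count sits below $\mu_2$ and check directly that the rewrite moves each subtraction strictly up past additions. An alternative is the inversion count "additions above subtractions," which is clearly decreased.
\item For \mvMod we use the number of $\bmod$ symbols, which drops by one. \dropMod also drops one $\bmod$.
\item \addBds only adds to $H$ a hypothesis $\toNat(v_1)\le P-1$ for variables $v_1$ lacking a bound. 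Since the variable set is finite and never grows (no rule introduces variables), the number of unbounded $\FFp$ variables is a decreasing final component.
\end{enumerate}
One must check that rules later in the priority order never increase earlier components. For instance, \mvNat introduces $\bmod$, but $\bmod$ sits in a lower component, so this is fine lexicographically.

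For $\toBitVec$, the \injBitVecLeq loop terminates because each application adds a new hypothesis $\toBVN{t_1}\bowtie\toBVN{t_2}$ from a fixed finite set determined by the $\NN$ inequalities in $H$. The guard $\notin H$ prevents repeats, and $b$ is fixed. For the second loop we again use a lexicographic measure:
\begin{enumerate}
\item the number of $\NN$ equalities, decreased by \injBitVec;
\item the sum over occurrences of $\toBVN{t}$ of the number of $\NN$ operators ($+,\cdot,-,\bmod,\mathtt{ite}$) in $t$.
\end{enumerate}
\mvBV, \mvBVIte, \mvBVSub and \mvBVMod each strictly decrease the second component, since the $\toBVN{C}$ introduced by \mvBVMod wraps a constant with no operators.

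The main obstacle is showing that rewrites like \mvZModSub and \mvMod do not interact badly, for example \mvMod reassociating terms so that a new $t_1-t_2+t_3$ pattern or a new distributable $\toNat$ appears. It also requires precisely specifying the substitution semantics $\Gamma[s/t]$ when $s$ has overlapping occurrences. I would handle this by checking each rule against each measure component in a small table, confirming that every rule strictly decreases its own component and does not increase any component earlier in the order.
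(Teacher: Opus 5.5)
Your proposal follows the paper's proof. The paper uses one lexicographic measure per \texttt{while} loop with essentially your components: for $\toNN$, $\FFp$-equalities, operators under $\toNat$, the $\bmod$ count, a ``$+$ after $-$'' count and unbounded variables; for the \injBitVecLeq loop, hypotheses still lacking a bitvector copy; for the last loop, $\NN$-equalities and operators under $\toBVNoN$. You place the \mvZModSub component above the $\bmod$ count rather than below it. That is harmless, because neither group of rules can increase the other's component.

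One thing needs fixing. Your first version of the \mvZModSub component counts subtraction nodes that are the left child of an addition, and that count need not decrease. For example, $((t_1-t_2)+t_3)+t_4$ rewrites to $((t_1+t_3)-t_2)+t_4$, which again contains exactly one such pair, so the claim that the rewrite ``does not create others'' is false. Use the alternative you mention instead: the number of (addition, subtraction) pairs in which the addition is a proper ancestor of the subtraction. It drops by at least one for each rewritten occurrence, and it is what the paper's ``number of $+_{\FFp}$ after $-_{\FFp}$'' entry is meant to capture.
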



\begin{theorem}
    Soundness: For any context $\Gamma$,  the bitvector translation $\Gamma'$ produced by $\toNN$ and $\toBitVec$ from $\Gamma$ is valid if and only if $\Gamma'$ is valid. 
\end{theorem}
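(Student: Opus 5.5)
We read the statement as asserting that the original context $\Gamma$ is valid if and only if its translation $\Gamma'$ is valid; the second ``$\Gamma'$'' is presumably a typo for $\Gamma$. The plan is an induction on the length of the derivation produced by $\toNN$ followed by $\toBitVec$. By the Termination theorem, every such derivation is finite, so it suffices to prove a one-step invariant. For $\Gamma=(G,H)$, we call $\Gamma$ valid iff every model of $H$ satisfies every goal in $G$. The invariant to establish is: if $\Gamma \leadsto \Gamma_1$ by a single rule of Figure~\ref{fig:tr1} or Figure~\ref{fig:tr2}, then
\begin{itemize}
\item the models of $H_1$ and of $H$ coincide, when restricted to the original symbols, and
\item in each such model, every goal of $G$ has the same truth value as its counterpart in $G_1$.
\end{itemize}
Validity equivalence then follows by chaining the steps.

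The rules fall into three groups, and we handle them in this order.

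\emph{Unconditional term rewrites.} These are \mvZModSub, \mvNat, \mvNatIte, \mvNatSubOverflow, \mvMod, \mvBV and \mvBVIte. Each replaces a term by one that denotes the same value in every interpretation. The justifications are standard facts:
\begin{itemize}
\item $\toNat$ is the canonical representative map, so $\toNat(a\bowtie b)=(\toNat a\bowtie\toNat b)\bmod P$ for $\bowtie\in\{+,\cdot\}$.
\item For subtraction, $\toNat(a-b)=(\toNat a+P-\toNat b)\bmod P$.
\item $\toBVNoN(N,\cdot)$ is a semiring homomorphism $\NN\to\ZZ/2^N$.
\item $\bmod$ factors through $+$ and $\cdot$.
\end{itemize}
By congruence, substituting equals for equals preserves the truth of every formula.

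\emph{Rewrites guarded by $H\models\cdots$.} These are \mvNatSub, \dropMod, \mvBVSub and \mvBVMod. Here the replaced and replacing terms are equal only in models of $H$, and that is exactly what the invariant requires.
\begin{itemize}
\item \mvNatSub: given $\toNat t_2\le\toNat t_1<P$, truncated subtraction agrees with field subtraction.
\item \dropMod: given $t_1<P$, we have $t_1\bmod P=t_1$.
\item \mvBVMod: given $t_1,C<2^N$, $\bvToNat$ commutes with the $\bmod$.
\end{itemize}
Rewriting inside a hypothesis $h\in H$ uses $H$ itself, including $h$, as justification. This is not circular: in every model of $H$ the new $h$ holds, and conversely a model of the new $H_1$ satisfies the guard, because the guard was derived from hypotheses unchanged by the step or from the old $h$, which is equivalent under the guard.

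\emph{Formula-level rules.}
\begin{itemize}
\item \injNat: $\toNat$ is injective on $\FFp$.
\item \sqrBds: $t\cdot t=t$ iff $t(t-1)=0$, which holds iff $t\in\{0,1\}$ because $\FFp$ has no zero divisors, which in turn holds iff $\toNat t\le 1$.
\item \injBitVec: the forward direction is trivial. The converse uses the guards $t_1,t_2\le 2^N-1$, under which $\toBVNoN(N,\cdot)$ is injective.
\item \addBds and \injBitVecLeq only add formulas that are true in every model of $H$: the first because $\toNat$ lands in $[0,P-1]$, the second by the same injectivity and order-preservation below $2^N$. Adding consequences does not change the set of models.
\end{itemize}

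The main obstacle is \mvBVSub, together with the bit-width discipline of $\toBitVec$. The guard $\toBVNoN(N,t_1)\ge\toBVNoN(N,t_2)$ only constrains $t_2\bmod 2^N$. If $t_2\ge 2^N$, the truncated $t_1-t_2$ is $0$ while the bitvector difference need not be. To close this gap we would rely on the width $b$ chosen on line~5 of $\toBitVec$. We would make precise the intended property of \calcBitWidth, namely that in every model of $H$ every $\NN$-subterm $t$ occurring in $\Gamma$ satisfies $t\le 2^b-1$. Granting this, $t_2<2^b$ holds, so $\toBVNoN$ reflects the order and the subtraction commutes.

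The same global width also resolves the mixed-width issue illustrated in the strategy discussion. Since every conversion uses the single width $b$, the hypotheses and goals are translated consistently, and the injectivity arguments above apply uniformly. This is the step where we expect to spend most of the care, since it depends on a bounds lemma for \calcBitWidth that the rules themselves do not state.
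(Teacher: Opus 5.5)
Your overall route is the paper's. Its proof simply asserts that each rule of $\toNN$ and $\toBitVec$ preserves validity in both directions, backed by Lean lemmas; you make this explicit rule by rule and compose along the finite derivation.

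Your objection to \mvBVSub is correct, and the paper's one-line proof hides it. Take $N=2$, $t_1=1$, $t_2=4$. Both printed guards hold ($1\le 3$ and $1_{[2]}\ge 0_{[2]}$), yet $\toBV{2}{1-4}=0_{[2]}$ while $\toBV{2}{1}-\toBV{2}{4}=1_{[2]}$. So the rule as printed can even turn an invalid goal into a valid one. Your patch through a semantic specification of \calcBitWidth works only if you promote its informal description to a guarantee. The local repair is cleaner: add $H\models t_2\le 2^N-1$ to the guard, which any kernel-checked version of the lemma must contain in some form. This keeps the argument rule-local.

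The genuine gap is your treatment of guarded rewrites inside a hypothesis. The argument you call non-circular is circular. To show that a model of $H_1$ satisfies the guard, you appeal to the old $h$, but you only know the old $h$ holds in that model once you know the guard does.

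The claim is in fact false. Take $v:\BVN$ with $2^N>P$, $H=\{\bvToNat(v)\bmod P=\bvToNat(v)\}$ and $G=\{\bvToNat(v)\le P-1\}$. Then $H\models \bvToNat(v)\le P-1$, so \dropMod fires and rewrites $h$ to the tautology $\bvToNat(v)=\bvToNat(v)$. The goal is untouched, and $\toBitVec$ leaves goal inequalities alone. Hence $\Gamma$ is valid but the result is not.

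So the ``only if'' half of the statement fails for the calculus as given. The paper's proof has the same blind spot, and no argument closes it without an extra side condition, e.g.\ requiring the guard to follow from $H\setminus h$, or retaining the old $h$. The same circularity affects your converse for \injBitVec whenever its bound guards are derived from the equality being rewritten.

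What your argument does establish is the direction a tactic actually needs. For every rule, every model of $H$ is a model of $H_1$, and goals agree on models of $H$. Fix a model of the original hypotheses and chain backwards: validity of $\Gamma'$ implies validity of $\Gamma$. You should either restrict the claim to this direction or add the side condition.
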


\subsection{Range Analysis}
\label{sec:range_analysis}






\begin{figure}[t]
 \fbox{%
  \begin{minipage}{0.972\textwidth}
\begin{center}
 \def\defaultHypSeparation{\hskip .01in}
 \AxiomC{ $t_1, t_2  :\NN$ \hspace{.5em} $\bowtie\ \in \{\leq ,\geq\}$  \hspace{.5em} $ \gamma \in G $  \hspace{.5em} $ \HasVars(t_1 \bowtie t_2) $    \hspace{.5em}$t_1 \bowtie t_2  \equiv \gamma $} 
 \LeftLabel{\introMVar}
\UnaryInfC{$G:= (G \setminus \gamma) \cup \{t_1 \bowtie w ,\hspace{.5em} t_2\ ( \{\leq ,\geq\} \setminus \bowtie)\  w \}$ }
  \DisplayProof
\end{center}

\begin{center}
\def\defaultHypSeparation{\hskip .01in}

  \AxiomC{$t_1,t_2 : \NN  $ \hspace{.5em} $\bowtie\; \in \{ +,  \cdot\}$  \hspace{.5em} $\gamma \in G$ \hspace{.5em} $\HasVars(t_1 \bowtie t_2)$ \hspace{.5em}   $t_1 \bowtie t_2 \leq w \equiv \gamma $ }
   \alwaysSingleLine
  \LeftLabel{\ineqAddMul}
  \UnaryInfC{$G:=(G \setminus \gamma) \cup \{  t_1 \leq w_1,  t_2 \leq w_2,  w_1 \bowtie w_2 \leq w \}$ }
  \DisplayProof
\end{center}
\begin{center}
\def\defaultHypSeparation{\hskip .01in}
\AxiomC{ $t_1,t_2 : \NN $ }
  \AxiomC{ $\gamma \in G $   \hspace{.5em} $ \HasVars(t_1-t_2) $   }
  \AxiomC{    $t_1 - t_2 \leq w \equiv \gamma$
   }
 
  \LeftLabel{\leSub}
  \TrinaryInfC{$G:= (G \setminus \gamma) \cup \{  t_1  \leq w \} $}
  \DisplayProof
\end{center}
 
\begin{center}
\def\defaultHypSeparation{\hskip .01in}
\AxiomC{  $t_1, t_2 : \NN$  \hspace{.05em} $\gamma \in G $   \hspace{.05em} $ \HasVars(t_1 \bmod  t_2 )  $   \hspace{.05em}  $t_1 \bmod  t_2 \bowtie w \equiv \gamma $ }
  
  \LeftLabel{\ineqMod}
  \UnaryInfC{$G:= (G \setminus \gamma) \cup \{ t_2 \leq w  \} $ }
  \DisplayProof
\end{center}

\begin{center}
\def\defaultHypSeparation{\hskip .01in}
\AxiomC{ $t_3 : \BB$ \hspace{.1em} $t_1, t_2 :\NN $  \hspace{.1em}  $\gamma \in G $ \hspace{.1em}  $ \HasVars(t_1)$ \hspace{.1em}  $\HasVars(t_2)  $ \hspace{.1em}  $  \mathtt{ite}(t_3,  t_1,  t_2)\leq w  \equiv \gamma$   }
  \alwaysSingleLine
  \LeftLabel{\shortstack{\texttt{leq}\\\texttt{If}}}
  \UnaryInfC{$G:= (G \setminus \gamma) \cup  \{ t_1 \leq w_1,   t_2 \leq w_2 ,   \mathtt{max}(w_1, w_2) \leq w  \}  $  }
  \DisplayProof
\end{center}

\end{minipage}
}
\caption{Range analysis rules for inequality decomposition. We use $w$ and $w_i$ for placeholder variables and assume that any new placeholder variables in the conclusion are fresh. Other notation follows Figure~\ref{fig:tr1}.}
\label{fig:ineq_decomp}
\vspace{-.3cm}
\end{figure}

\begin{figure}[t]
 \fbox{%
  \begin{minipage}{0.97\textwidth}

\begin{center}
\def\defaultHypSeparation{\hskip .01in}
\alwaysNoLine
\AxiomC{$C \in \NN$ \hspace{.5em} $t_1 : \NN$ \hspace{.5em}   $\bowtie\; \in \{ \leq,  \geq\}$} \UnaryInfC{ $\gamma\in G$ \hspace{.5em} $\HasVars(t_1)$ \hspace{.5em} 
$ t_1 \bowtie  w \equiv \gamma$ \hspace{.5em} $ t_1\bowtie C \in H$ }
\alwaysSingleLine 
  \LeftLabel{\ineqHyp}
  \UnaryInfC{ $
  G :=
 ( G \setminus \{\gamma' | \gamma' \in G  \; \land \;  w \preceq \gamma' \}) \cup 
  \{\gamma'[w / C] \mid \gamma' \in G\setminus \gamma \}$
}
  \DisplayProof
\end{center}
\begin{center}
\def\defaultHypSeparation{\hskip .001in}
\AxiomC{ $v : \FFp$  \hspace{.5em} $v \notin \pVars $ }
  \AxiomC{$ \toNat(v) \leq w \equiv \gamma$}
  \AxiomC{$\gamma \in G$}

  \LeftLabel{\leZMod}
  \TrinaryInfC{$G :=
  (  G \setminus \{\gamma' | \gamma' \in G  \; \land \;  w \preceq \gamma' \}) \cup 
  \{\gamma'[w / P -1] \mid \gamma' \in G \setminus \gamma \}$}
  \DisplayProof
\end{center}
\begin{center}
\def\defaultHypSeparation{\hskip .01in}
\AxiomC{ $N \in \NNOne$ \hspace{.3em} $v : \BV_{[N]}$}
  \AxiomC{  $v \notin \pVars $ \hspace{.2em}   $\bvToNat(v) \leq w \equiv \gamma$}
  \AxiomC{$\gamma \in G$ }
  \LeftLabel{\leBV}
  \TrinaryInfC{$G :=
    (G \setminus \{\gamma' | \gamma' \in G  \; \land \;  w \preceq \gamma' \}) \cup 
  \{\gamma'[w / 2^N-1] \mid \gamma' \in G \setminus \gamma \} $}
  \DisplayProof
\end{center}

\begin{center}
\def\defaultHypSeparation{\hskip .01in}

\AxiomC{$t_1 : \NN$ \hspace{.5em} $\HasVars(t_1)$ \hspace{.5em} 
$t_1 \geq w \equiv \gamma$ \hspace{.5em} $\gamma \in G$ }
\alwaysSingleLine
  \LeftLabel{\geNat}
  \UnaryInfC{$G:=  (G \setminus \{\gamma' | \gamma' \in G  \; \land \;  w \preceq \gamma' \} ) \cup \{\gamma'[w / 0] \mid \gamma' \in G \setminus \gamma  \} $}
  \DisplayProof
\end{center}

\begin{center}
\def\defaultHypSeparation{\hskip .01in}

\AxiomC{$C \in \NN$ \hspace{.5em} $\bowtie  \;\in \{\leq, \geq\}$}  
\AxiomC{ $\gamma \in G$ \hspace{.5em} $C \bowtie w \equiv\gamma$ }
\alwaysNoLine
\BinaryInfC{$\forall\, \gamma' \in G. (w \not\preceq \gamma'  \vee \forall\, v \preceq \gamma'.\: v \notin \ogVars) $ }
\alwaysSingleLine
  \LeftLabel{\ineqConst}
  \UnaryInfC{$G:= ( G \setminus \{\gamma' | \gamma' \in G  \; \land \;  w \preceq \gamma' \} ) \cup 
  \{\gamma'[w / C] \mid \gamma' \in G  \setminus \gamma \; \} $}
  \DisplayProof
\end{center}

\end{minipage}
}

\caption{Range analysis rules for placeholder variable elimination. Notation follows that of Figure~\ref{fig:ineq_decomp}.
} 
\label{fig:meta_elim}
\end{figure}

\begin{figure}[t!]
 \fbox{%
  \begin{minipage}{0.97\textwidth}

 \begin{center}
 \def\defaultHypSeparation{\hskip .01in}
 \AxiomC{$C_1,C_2 \in \NN$}
  \AxiomC{$\gamma \in \Gamma$}
 \AxiomC{ $C_1 \leq C_2 \equiv \gamma$}

    \LeftLabel{\decide}
   \TrinaryInfC{ $G: = (G \setminus \gamma) \cup \{ \decide (\gamma) \}$}
  \LeftLabel{\rmvMVar}

  \DisplayProof 
\end{center}

\begin{center}
\def\defaultHypSeparation{\hskip .01in}
\alwaysNoLine
  \AxiomC{$ t_1,t_2, t_3, v_1,v_2: \NN $ \hspace{.5em}  $\bowtie\; \in \{\leq, \geq\}$ }
  \UnaryInfC{ $\gamma \in G$ \hspace{.5em} $t_1\bowtie t_2 \equiv \gamma$ \hspace{.5em} $ v_1,v_2 \preceq \gamma $ \hspace{.5em} $\OnlyVars(\gamma)$}


  \UnaryInfC{ $\HasSub(\gamma) \; \vee \; \HasVars(t_1) \land  \HasVars(t_2)$   \hspace{.5em} $  v_1 \leq 1 \in H$ \hspace{.5em}  $v_2 \leq 1 \in H$ }
  \alwaysSingleLine
  \LeftLabel{\shortstack{\texttt{ineq}\\\texttt{Cases}}}
 \UnaryInfC{$G:= (G \setminus \gamma) \cup \{  \gamma[v_1,v_0/ 0,0] ,  \gamma[v_1,v_0/1,0]  , \gamma[v_1,v_0/ 0,1]  , \gamma[v_1,v_0/ 1,1] \} $}
  \DisplayProof
\end{center} 
\begin{center}
\def\defaultHypSeparation{\hskip .01in}
\alwaysNoLine
  \AxiomC{ $N \in \NNOne$ \hspace{.5em} $ v_1, t_1, t_2 : \NN $ \hspace{.5em}  $\bowtie\; \in \{ \leq,  \geq\}$ }
  \UnaryInfC{$ \gamma \in G $ \hspace{.5em}$  v_1\cdot_{\NN}t_1 + (1-v_1)\cdot_{\NN}t_2 \bowtie w \equiv \gamma $  \hspace{.5em} \hspace{.5em}  $ v_1 \leq 1 \in H $ 
  }
  \LeftLabel{\ineqMux}
\alwaysSingleLine
  \UnaryInfC{$G:= (G \setminus \gamma ) \cup \{\mathtt{ite}(v_1 = 0 ,  t_2 , t_1) \bowtie w \}  $}

  \DisplayProof
  \end{center}

\end{minipage}
}

\caption{Range analysis rules for constant and bit reasoning. Notation follows that of Figure~\ref{fig:ineq_decomp}.} 

\label{fig:bit_constants}
\vspace{-.3cm}
\end{figure}

We now describe the range analysis algorithm that handles $\Sigma_{\NN}$ inequalities. The algorithm is presented as a calculus in 
 Figures~\ref{fig:ineq_decomp}-\ref{fig:bit_constants}.
At a high level, the algorithm introduces new placeholder variables and applies lemmas to rewrite goals to formulas that can be inferred from the hypotheses. It also detects cases where variable dependencies influence these bounds and enumeration is feasible, and discharges the resulting goals by enumeration. We first show how the algorithm works on a simple modulus removal goal that arises during translation of the example from Section~\ref{sec:motivate} and then explain the rules in more detail.  
\paragraph{Motivating Example Revisited} Recall that our hypothesis state includes the formulas $\{ \toNat(x_0) \leq 1,\ \toNat(y_0) \leq 1 \}$. One goal that arises when processing the conclusion is:
$\toNat(x_0) + \toNat(y_0)\leq P- 1$.
To prove such goals, we introduce placeholder variables, which are implicitly existentially quantified. We then obtain formulas that can be unified with hypotheses by instantiating the placeholder variables.  The evolution of the goal state is shown below.
\[
\begin{aligned}
G_1 &=  \{\, \toNat(x_0)+\toNat(y_0) \le w_1,\; P-1 \ge w_1  \,\} \\
G_2 &=  \{\, 
\toNat(x_0)\le w_2,\;
\toNat(y_0)\le w_3,\;
w_2+w_3\le w_1, \;  P-1 \ge w_1 \} \\
G_3 &=   \{\, 1+1\le w_1, P-1 \ge w_1 \} \\
G_4 &=  \{\, 1+1 \le P-1 \,\}
\end{aligned}
\]

The transition from $G_2$ to $G_3$ uses the hypotheses in $H$ to match the pattern in the goal, resulting in the concrete bounds obtained in $G_3$.

To automate the inequality reasoning we formulate a calculus made up of three types of rules: Rules that decompose inequalities using placeholder variables (Figure~\ref{fig:ineq_decomp}), rules that remove placeholder variables (Figure~\ref{fig:meta_elim}), and rules that reason about bits and constants (Figure~\ref{fig:bit_constants}). 

To help describe the rules, we introduce additional notation and helper functions. We use \emph{original variables}, denoted by the set  \ogVars, to describe variables that exist in the proof context \emph{before} starting range analysis, and \emph{placeholder variables}, denoted by the set \pVars, to describe \emph{implicitly existentially quantified} variables introduced by the range analysis.  $\HasVars(t)$ is true iff all variables occurring in $t$ (of which there must also be at least one) are not in $\pVars$.
$\OnlyVars(t)$ is true iff a term contains exactly 2 distinct variables not in $\pVars$.
$\HasSub(t)$ is true iff $t$ contains at least one subtract ($-_{\NN}$) operator.
\paragraph{Inequality Decomposition}  \introMVar introduces a placeholder variable into any inequality that contains at least one original variable and no placeholder variables. We ensure that placeholder variables are on the right-hand side of inequalities when original variables or constants are present in the expression, to make decomposition and elimination easier. \ineqAddMul and \leqIf rely on standard interval arithmetic \cite{hickey2001interval} to introduce additional placeholder variables into the context.  \leSub strengthens the goal by ignoring $t_2$, which is reasonable since in practice $t_2$ rarely has a lower bound other than zero. 
Finally, \ineqMod similarly strengthens the goal by keeping only the modulus. 

\paragraph{Placeholder Variable Elimination}
After decomposition, we eliminate placeholder variables by instantiating them with bounds derived from the context. The rule \ineqHyp instantiates placeholder variables using inequalities present in the hypotheses. The rule \leZMod uses the fact that finite field values are bounded from above by the field modulus, \leBV uses the fact that $N$-bit bitvectors are bounded above by $2^{N}-1$, and \geNat relies on the fact that natural numbers are always greater than or equal to zero. If a placeholder variable occurs in an inequality with a constant, we instantiate the placeholder variable with the constant itself using \ineqConst, but only if all other occurrences of the placeholder variable do not include original variables. The check ensures we do not get into a rule application cycle.

\paragraph{Constant and Bit Reasoning}
Once all placeholder and original variables are eliminated from a goal, we can use \eval to check if the derived constant term is true. Since the inequality decomposition rules are intentionally syntax-directed and do not account for dependencies between variables, we add rules that detect and resolve common dependency patterns in ZKP arithmetizations. \ineqCases identifies potential dependence between two bit-simulating variables $v_0$ and $v_1$. We restrict this rule to the setting when only two variables are present to keep the case analysis manageable. We conservatively flag a dependency when either ($i$) the goal contains a subtraction operator, or ($ii$) both sides of the inequality contain variables. When a potential dependency is detected, \ineqCases performs explicit case analysis over the four possible assignments to $v_0$ and $v_1$. \ineqMux handles encodings that simulate XORs; if a bit-simulating variable is used as guard, the goal can be turned into an if/then/else expression. 

Due to its inability to detect and solve all variable dependencies and the fact that our range analysis sometimes strengthen goal, it is not complete (e.g., \eval might return \False for a valid inequality) but it is terminating and sound. We state the theorems below and include the proofs in Appendix~\ref{ap:proof-range} due to space constraints. 

\begin{theorem}
    Termination: For any context $\Gamma$, there is no infinite derivation from the range analysis calculus.
\end{theorem}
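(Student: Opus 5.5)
We will prove termination by exhibiting a well-founded measure on goal sets $G$ that strictly decreases with every rule application of the range analysis calculus (Figures~\ref{fig:ineq_decomp}--\ref{fig:bit_constants}). Since $G$ is a finite set of inequalities, we take as measure a multiset of per-goal weights and order it with the Dershowitz--Manna multiset extension of a well-founded order on single goals. A rule that replaces one goal by finitely many goals, each strictly smaller, then decreases the measure. The per-goal weight will be a lexicographic tuple
\[
\mu(\gamma) = \bigl(c(\gamma),\ s(\gamma),\ p(\gamma),\ f(\gamma)\bigr).
\]
Here $c(\gamma)$ is the number of distinct original variables in $\gamma$ that still admit case splitting. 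The component $s(\gamma)$ is the number of occurrences of $\Sigma_{\NN}$ operators ($+,\cdot,-,\bmod,\mathtt{ite}$) applied to subterms containing original variables. The component $p(\gamma)$ is the number of placeholder variables occurring in $\gamma$. Finally, $f(\gamma)\in\{0,1\}$ is $1$ exactly when $\gamma$ is a variable-containing inequality that has not yet received its placeholder through \introMVar, and $\gamma$ is not a closed constant comparison.

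We then check each rule.

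\begin{itemize}
\item \textbf{Decomposition rules.} \ineqAddMul, \leSub, \ineqMod and \leqIf replace $\gamma$ by goals in which the top operator over original variables has been removed. Each child $t_i \le w_i$ contains strictly fewer operator symbols, and the residual constraint $w_1 \bowtie w_2 \le w$ contains no original variables at all. So $s$ strictly drops for every produced goal, while $c$ does not grow, since no new original variables appear. The new placeholders only raise the lower component $p$, which is harmless.
\item \textbf{\introMVar.} Its guard \HasVars{} forbids placeholders. The rule splits $t_1\bowtie t_2$ into $t_1\bowtie w$ and $t_2\ \bowtie'\ w$. Each new goal has no more operators or original variables than $\gamma$, and its flag $f$ drops from $1$ to $0$; this is precisely why $f$ is needed.
\item \textbf{Elimination rules.} \ineqHyp, \leZMod, \leBV, \geNat and \ineqConst delete the goal $\gamma$ outright and substitute a constant for $w$ everywhere else. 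This removes one placeholder from every goal containing $w$, lowering $p$, and introduces no original variables or operators. For \ineqConst, its side condition guarantees that the goals receiving the substitution contain no original variables, so the higher components $c$ and $s$ are unaffected.
\item \textbf{\eval.} It turns a closed comparison into \True{} or \False{}, which we give minimal weight.
\item \textbf{\ineqMux.} It replaces $v_1 t_1+(1-v_1)t_2$ by $\mathtt{ite}(v_1=0,t_2,t_1)$, eliminating two multiplications, one subtraction and one addition while adding only one $\mathtt{ite}$. Hence $s$ decreases.
\end{itemize}

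The main obstacle is \ineqCases. It produces four goals, each of which must be strictly below $\gamma$ in our order. Substituting constants for $v_0,v_1$ removes two original variables, so $c$ strictly decreases on every child; this is why $c$ is placed first in the lexicographic tuple. The substitution may leave $s$ unchanged or even leave expressions that re-enable decomposition, but lexicographic dominance of $c$ handles that. We must also check that after the substitution \OnlyVars{} cannot fire again on the same goal. It cannot, because a goal with exactly two original variables loses both. We should verify that the guard requires $v_0,v_1$ to be exactly those two variables; if the rule's statement permits the split variables to be other than the two counted by \OnlyVars{}, we would instead bound the number of case splits by the initial number of original variables, which decreases globally.

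Combining these cases, every rule application strictly decreases the multiset measure. Well-foundedness of the multiset order over a well-founded lexicographic order on $\NN^4$ then rules out infinite derivations, which proves the theorem.
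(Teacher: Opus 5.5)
\paragraph{The gap: the \introMVar case.} Your tuple ranks $p$ above $f$. But \introMVar raises $p$ from $0$ to $1$ on the child $t_1 \bowtie w$, and it leaves $c$ and $s$ unchanged whenever $t_2$ has no original variables. Take the paper's own example: $\toNat(x_0)+\toNat(y_0) \le P-1$ yields the child $\toNat(x_0)+\toNat(y_0) \le w_1$. That child has weight $(c,s,1,0)$, strictly above the parent's $(c,s,0,1)$, so the multiset does not decrease.

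Swapping to $(c,s,f,p)$ does not help, because the elimination rules can re-enable \introMVar. Split $t_1 \le t_2$ (original variables on both sides) into $t_1 \le w$ and $t_2 \ge w$, and let \ineqHyp delete $t_1 \le w$. The sibling becomes $t_2 \ge C$: its $f$ rises from $0$ to $1$ while $c$ and $s$ stay the same. It also need not be dominated by the deleted goal; take $t_1$ to be a single $\toNat(x)$ and $t_2$ large. The sibling then passes from $t_2 \ge C$ to $t_2 \ge w'$, returning to its own shape. So a weight computed from each goal in isolation has no way to record progress along this chain. What actually decreases is global: a goal containing original variables was deleted. Reading $f$ historically does not help either, since $t_2 \ge C$ would then have $f=0$ and splitting it again strictly raises $p$.

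\paragraph{The missing idea.} This global decrease is exactly what the paper uses. Its leading component $\vars(G)$ counts occurrences of original variables over \emph{all} goals.
\begin{itemize}
\item \ineqHyp, \leZMod, \leBV and \geNat each delete a goal containing an original variable, so $\vars(G)$ drops whatever the substitution does to siblings.
\item \ineqCases lowers $\vars(G)$ too.
\item The decomposition rules do not raise $\vars(G)$ and strictly lower $\ops(G)$.
\item \introMVar is paid for by $\expps(G)$, the number of goals with original variables but no placeholder. The paper appeals to $\expps(G)$ for this step even though its displayed tuple $(\vars(G),\ops(G),|G|)$ omits it; it must be inserted before $|G|$.
\end{itemize}

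You can repair your proof in the same spirit. Order the pair $(\vars(G),M(G))$ lexicographically, where $M(G)$ is your multiset of per-goal weights with the tuple reordered to $(c,s,f,p)$. Finally, a closed comparison already has weight $(0,0,0,0)$ under your definitions. So \eval is a strict decrease only if \True and \False are given a fresh bottom weight.
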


\begin{theorem}
    Soundness: For any context $\Gamma$, if a derivation from $\Gamma$ reaches a context where the set of goals is empty, then all goals in $\Gamma$ hold.
\end{theorem}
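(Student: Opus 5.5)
The plan is to prove a stronger invariant by induction on the length of the derivation, going backwards from the final context. Placeholder variables are implicitly existentially quantified, so we say a context $(G,H)$ is \emph{satisfied} when the following holds. For every assignment to the original variables that satisfies $H$, there exists an assignment of natural numbers to the placeholder variables occurring in $G$ under which every goal in $G$ holds. The empty goal set is trivially satisfied.

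The key lemma is that each rule of Figures~\ref{fig:ineq_decomp}--\ref{fig:bit_constants} is \emph{backward sound}. If the context after applying the rule is satisfied, then so was the context before it. Given that lemma, induction over the derivation shows that the initial context is satisfied. In the initial context no placeholder variables occur, because \introMVar only adds them later. Satisfaction of the initial context is therefore exactly the claim that all goals in $\Gamma$ hold under $H$.

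The proof of the lemma goes rule by rule, fixing an assignment to the original variables that satisfies $H$.

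\textbf{Decomposition rules.}
\begin{itemize}
\item \introMVar: given a witness $w$ with $t_1 \le w$ and $t_2 \ge w$, transitivity gives $t_1 \le t_2$.
\item \ineqAddMul: from $t_1\le w_1$, $t_2\le w_2$ and $w_1\bowtie w_2\le w$ we get $t_1\bowtie t_2\le w$, by monotonicity of $+$ and $\cdot$ on $\NN$. This is precisely where the choice of $\NN$ over the integers matters.
\item \leSub: truncated subtraction satisfies $t_1-t_2\le t_1$.
\item \ineqMod: for $t_2>0$ we have $t_1\bmod t_2 < t_2$, and Lean's convention $t_1 \bmod 0 = t_1$ has to be checked separately. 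I expect the rule is only sound for the $\le$ direction, or under a positivity side condition on $t_2$. I would state the lemma for $\bowtie = {\le}$ and note the restriction.
\item \leqIf: split on $t_3$ and use $t_i\le\max(w_1,w_2)$.
\end{itemize}

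\textbf{Elimination rules.} These substitute a concrete value $C$ for $w$ and discard $\gamma$. Backward soundness follows by choosing $w:=C$ as the witness. Then $\gamma$ itself holds, for the following reasons:
\begin{itemize}
\item \ineqHyp: $\gamma$ is a hypothesis in $H$.
\item \leZMod: the $\toNat$ range gives $\toNat(v)\le P-1$.
\item \leBV: the $\bvToNat$ range gives $\bvToNat(v)\le 2^N-1$.
\item \geNat: $t\ge 0$ holds in $\NN$.
\item \ineqConst: the substituted goal is literally $C \bowtie C$, which holds by reflexivity.
\end{itemize}
The remaining goals hold because they are exactly the substituted goals $\gamma'[w/C]$, and these are satisfied by hypothesis. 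The one subtlety is that the goals removed are the ones containing $w$, and they are re-added in substituted form. I must check that removing $\gamma$ loses nothing, which is exactly the witness argument just given. I must also check that no other placeholder depends on $w$ in a way that breaks the shared existential witness. Here freshness of placeholders and the fact that the substitution is uniform across $G$ should suffice.

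\textbf{Constant and bit rules.}
\begin{itemize}
\item \eval: sound by the correctness of evaluation, since it only discharges $\gamma$ when $\gamma$ evaluates to true; otherwise the goal remains $\False$.
\item \ineqCases: sound because $v_1\le 1$ and $v_2\le1$ in $H$ force $v_i\in\{0,1\}$, so the four substituted goals cover every case.
\item \ineqMux: sound because for $v_1\in\{0,1\}$ we have $v_1 t_1+(1-v_1)t_2=\mathtt{ite}(v_1=0,t_2,t_1)$ in $\NN$, and the truncated $1-v_1$ is harmless for these values.
\end{itemize}

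\textbf{Main obstacle.} I expect the hardest part to be handling the existential placeholders. Goals sharing a placeholder must be satisfied by a \emph{single} witness, and the elimination rules both delete and rewrite goals. The invariant must therefore be formulated over the whole goal set rather than goal by goal. I would first try to make that formulation precise and verify the elimination rules against it. After that I would double-check the edge cases of \ineqMod ($t_2 = 0$ and the $\ge$ direction), since there the rule as written may need a side condition for the lemma to go through.
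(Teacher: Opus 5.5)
Your proposal is correct and is essentially the paper's argument. Each range-analysis rule preserves or strengthens the goal set (the paper cites an ``only-if'' Lean lemma per rule), so validity propagates backward from the empty goal set; your $\forall$-original/$\exists$-placeholder invariant over all of $G$ makes explicit what the paper leaves to Lean's shared metavariables.

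Your worry about \ineqMod is justified, and the restriction you propose is a genuine side condition the statement needs, not just a convenience. If $\bowtie$ may be $\geq$, there is a derivation that closes a false goal:
\begin{itemize}
\item \introMVar turns $x \bmod 5 \geq 3$ into $\{x \bmod 5 \geq w,\ 3 \le w\}$;
\item \ineqMod turns this into $\{5 \le w,\ 3 \le w\}$;
\item \ineqConst with $w:=5$ leaves $\{3 \le 5\}$, which \eval discharges;
\end{itemize}
yet the original goal fails at $x=0$. Separately, $t_2 = 0$ breaks the $\leq$ case under Lean's convention $x \bmod 0 = x$. So the rule must be restricted to $\leq$ with $t_2>0$, as you suggest. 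The translation never violates this, since its moduli are $P$ or $C\in\NNOne$.
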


\subsection{Implementation}
\label{sec:implementation}

We implement  \BitModEq in Lean v4.23.0. We use \bvdecide~\cite{bvdecide} for bit-blasting and \texttt{mathlib}~\cite{10.1145/3372885.3373824} for key lemmas.
%
The majority of our translation and range analysis rules are implemented using existing lemmas in \texttt{mathlib}. We proved additional lemmas for the rules: \mvNatIte, \mvNatSubOverflow, \addBds, \sqrBds, \injBitVecLeq, \mvBVIte, \mvBVSub, \mvBVMod, \leSub, and \ineqMux (\Cref{fig:tr1,fig:tr2,fig:ineq_decomp,fig:bit_constants}). 
To reliably pattern match and apply lemmas, we require formulas to be in a normalized form. We rely on Lean's \texttt{simp},  \texttt{split\_ifs}, and \texttt{split\_ands} to simplify the syntax of formulas and add rewrites for commutativity, associativity and the negation operator.
When doing range analysis, we always first try to apply \ineqCases, \ineqMux, \ineqHyp and \ineqConst to have a better chance of proving the goal. \geNat is only used when no other rules apply. 

%
We rely on Lean's metavariable mechanism (introduced as part of lemma hypothesis) to represent placeholder variables and on Lean's kernel to eliminate them throughout the proof context when applying placeholder elimination rules (Figure ~\ref{fig:meta_elim}).
The pipeline of \BitModEq can involve proving similar inequalities (e.g., when $P\geq 2^N$, one way to prove $t_1 \bmod P \leq 2^N$ is to prove $t_1 \leq P \land t_1 \leq 2^N$). Thus, to prove the soundness of the bitvector conversion step, when $P \geq 2^N$ we prove $t_1 \leq 2^N$ and use it to solve both goals.
\section{Evaluation}%
\label{sec:evaluation}

\begin{table}[t]
\centering
\setlength{\tabcolsep}{4pt}
\small

\begin{subtable}[t]{0.48\linewidth}
\centering
\small
\begin{tabular}{lcc}
\toprule

Solver & Lasso & T\&S \\
\midrule
\BitModEq & \textbf{24} & \textbf{12} \\
\texttt{cvc5} {\tiny weak} & 21 & 5 \\
\texttt{cvc5}{-split} {\tiny weak} & 10 & 2 \\
\texttt{cvc5} {\tiny strong} & 4 & 0 \\
\texttt{cvc5}{-split} {\tiny strong} & 3 & 0 \\
\midrule
\# Bench. & 24 & 15 \\
\bottomrule
\end{tabular}
\subcaption{Solved instances}
\label{tab:tab:jolt_results}
\end{subtable}
\hfill
\begin{subtable}[t]{0.48\linewidth}
\centering
\small
\begin{tabular}{lcc}
\toprule
Solver & Lasso & T\&S \\
\midrule
\BitModEq & 0 & 0 \\
SMT  {\tiny weak} & 0 & 0 \\
SMT  {\tiny strong} & 9 & 15 \\
\midrule
\# Bench. & 24 & 15 \\
\bottomrule
\end{tabular}
\subcaption{Statement generation timeouts
}
\label{tab:to-jolt}
\end{subtable}
\caption{\BitModEq and \texttt{cvc5} comparison for Lasso and T\&S benchmarks.}
\label{tab:results_jolt_all}
\vspace{-.2cm}
\end{table}


%

\begin{figure}[t]
  \centering
  \begin{subfigure}[t]{0.48\textwidth}
    \centering
    \includegraphics[width=\textwidth]{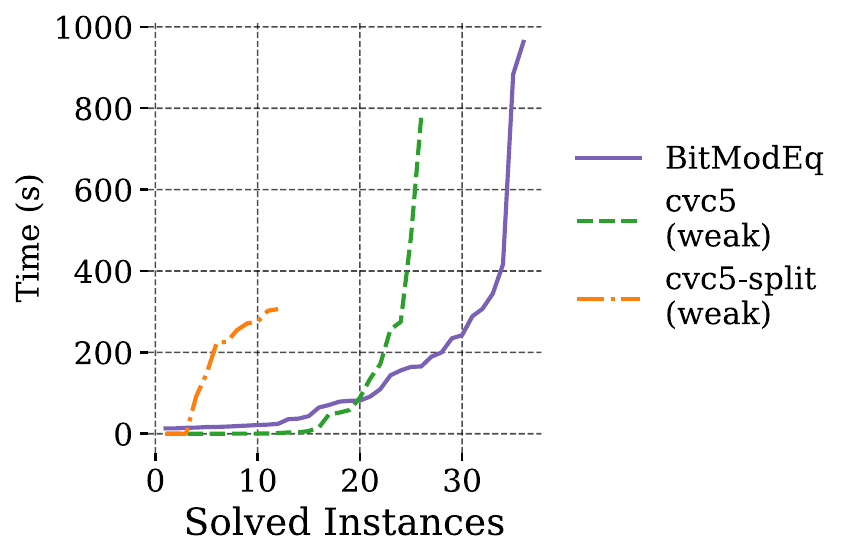}
    \caption{Jolt (Lasso and T\&S)}
    \label{fig:cactus-jolt}
  \end{subfigure}\hfill
  \begin{subfigure}[t]{0.48\textwidth}
    \centering
    \includegraphics[width=\textwidth]{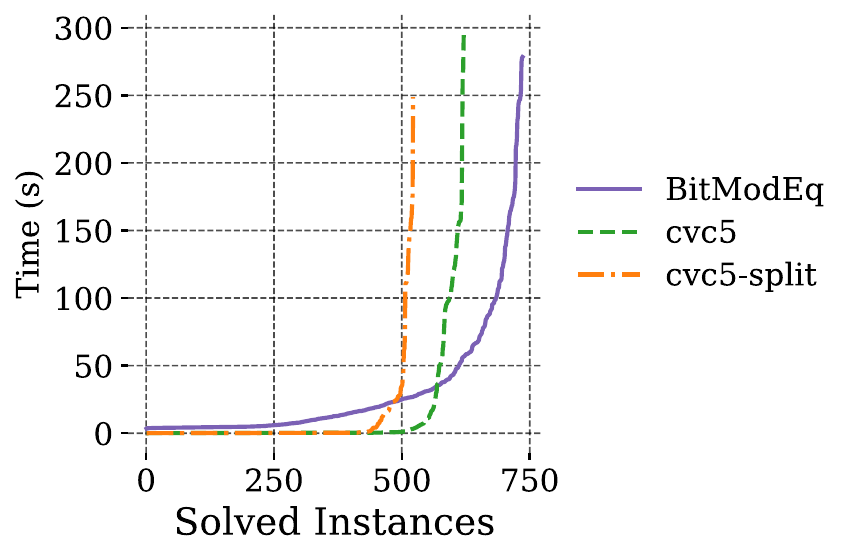}
    \caption{Sampled CirC }
    \label{fig:cactus-circ}
  \end{subfigure}
  \caption{Cactus plots across benchmark suites comparing \BitModEq and \texttt{cvc5}}
  \label{fig:cactus-side-by-side}
\vspace{-.4cm}
\end{figure}
\begin{table}[t]
\centering
\setlength{\tabcolsep}{2.5pt}
\renewcommand{\arraystretch}{1.15}
\begin{tabular}{l *{17}{c} c}
\toprule
Solver &
\multicolumn{17}{c}{Bit width} & All \\
\cmidrule(lr){2-18}
& 0 & 2 & 4 & 6 & 8 & 10 & 12 & 14 & 16 & 18 & 20 & 22 & 24 & 26 & 28 & 30 & 32 & \\
\midrule

\BitModEq
& 34 & 50 & 51 & 50 & \textbf{49} & \textbf{50} & \textbf{47} & \textbf{43} & \textbf{41} & \textbf{42} & \textbf{40} & \textbf{42} & \textbf{44} & \textbf{38} & \textbf{41} & \textbf{40} & \textbf{35} & \textbf{737}\\

\texttt{cvc5}
& \textbf{40} & \textbf{78} & \textbf{71} & \textbf{54}
& 38 & 35 & 33 & 34 & 30 & 27 & 26 & 25 & 27 & 26 & 27 & 27 & 23 & 621 \\

\texttt{cvc5}-split
& \textbf{40} & 74 & 61 & 47 & 32 & 31 & 28 & 27 & 22 & 21 & 20 & 20 & 21 & 20 & 20 & 21 & 17 & 522 \\
\midrule  
\# Bench. & 40 & 80 & 80 & 80 & 80 & 80 & 80 & 80 & 80 & 80 & 80 & 80 & 80 & 80 & 80 & 80 & 78 & 1318 \\
\bottomrule
\end{tabular}






\vspace{.3cm}
\caption{ \BitModEq and \texttt{cvc5} comparison on  CirC benchmarks (0--6 arguments; 0--32 bits), sampling up to two benchmarks per even bit width and operation (i.e., $\min(2,n)$ where $n$ is the number of benchmarks available). Extraction checks are not generated at 32 bits, as this is the base width for the operator.}
\label{tab:results_bits}
\end{table}

 
\begin{table}[t]
\centering
\setlength{\tabcolsep}{4pt}
\small

\centering
\small
\begin{tabular}{lccc}
\toprule
Solver  & Jolt & T\&S & CirC\\
\midrule
\BitModEq w/ full range analysis & \textbf{24} & \textbf{12} & \textbf{737} \\
\BitModEq w/o case splits & 21  & 8  & 720  \\  
\BitModEq w/ lean tactics & 7 & 4 &  434 \\

\# Bench. & 24 & 15 & 1318 \\
\bottomrule
\end{tabular}
\vspace{.3cm}
\caption{Ablation study of the \BitModEq range analysis procedure and comparison against existing Lean automation tactics.}
\label{tab:ablation_results}
\vspace{-1cm}
\end{table}
\begin{table}[t]
\centering
\small
\setlength{\tabcolsep}{10pt}
\renewcommand{\arraystretch}{1.1}
\begin{tabular}{lc}
\toprule
\textbf{\BitModEq Stage} & \textbf{Runtime (\%)} \\
\midrule
Range analysis & 52 \\
Translation & 23 \\
\texttt{bv\_decide} & 21 \\
Lean kernel checking & 4 \\
\bottomrule
\end{tabular}
\vspace{.3cm}
\caption{Average runtime breakdown of \BitModEq on solved Jolt benchmarks. We use type checking as a proxy for Lean kernel checking.}
\label{tab:runtime-breakdown}
\vspace{-1cm}
\end{table}

Our experiments answer two key empirical questions: 
(1) How does \BitModEq compare to the state of the art on ZKP arithmetization benchmarks?
(2) How critical is the proposed range analysis algorithm to \BitModEq{}'s effectiveness?   All experiments are
run on a cluster with Intel Xeon E5-2637 v4 CPUs.  

\subsection{Benchmarks}
To evaluate the effectiveness of \BitModEq, we verify ZKP arithmetizations extracted from two real-world systems described below. The evaluated systems are representative of the types of arithmetizations commonly used in many modern ZKP frameworks, including Cheesecloth and  Ligetron  \cite{cuellar2023cheesecloth, cuellar2025cheesecloth, wang2024ligetron}.
\paragraph{Jolt} Jolt~\cite{arun2024jolt} is a zkVM that relies on lookup tables to encode RISC-V instructions as finite field polynomials. The original Jolt design uses Lasso~\cite{setty2024unlocking} to perform 8-bit input lookup tables, while subsequent work, Twist\&Shout~\cite{cryptoeprint:2025/105}, extends this approach to 32-bit input tables. 
We extract the lookup table arithmetizations from both systems and verify their semantic equivalence to corresponding Lean bitvector operations.\footnote{We omit three Twist\&Shout instructions which depend on external assumptions. } We include a breakdown of benchmark size and complexity in Appendix ~\ref{ap:stats}.

\paragraph{CirC}
CirC~\cite{circ} is a zkVM compiler whose arithmetization passes were previously verified using \texttt{cvc5}~\cite{ozdemir2025bounded}. We implement a Lean backend for arithmetization verification in CirC and generate benchmarks at larger bit-widths than those considered in prior work.

\subsection{State of the Art Comparison}

\paragraph{Baselines}
Since there is no prior automation in Lean that works on ZKP arithmetizations, we compare \BitModEq against \texttt{cvc5}'s \cite{cvc5} two main finite field decision procedures, one based on full \gbs~\cite{CAV:OKTB23} and the other on split \gbs~\cite{CAV:OPBFBD24} (v.1.3.2), on benchmarks encoded in the SMT finite field and bitvector theories. 
For Jolt we evaluate the \emph{weak} and \emph{strong} SMT encodings, as discussed in Section~\ref{sec:motivate}. For CirC benchmarks, we rely on the existing SMT backend, where the overflow checks are checked by the compiler \cite{ozdemir2025bounded}.  In contrast, our Lean encoding always maintains the stronger guarantees. If generation of a benchmark for any encoding takes more than a minute we consider it a timeout.

\paragraph{Jolt} We run the Jolt benchmarks with a 20 minute timeout and 16GB memory limit. As summarized in Table~\ref{tab:tab:jolt_results}, \BitModEq solves three more benchmarks than the weak \texttt{cvc5} encoding for the 8-bit lookup tables, and seven more for the 32-bit lookup tables. 
Table~\ref{tab:to-jolt} shows that the strong SMT encoding times out during generation for 24 benchmarks, all of which involve range checks with large constants ($2^{32}$). The weak encoding does not timeout as it avoids the range check.

The cactus plot in Figure~\ref{fig:cactus-jolt} shows that while \texttt{cvc5} solves simpler benchmarks faster (e.g., \texttt{msb}, \texttt{lsb} arithmetization), \BitModEq significantly outperforms \texttt{cvc5} on larger benchmarks while also producing kernel-checked proofs, and despite operating under a stronger encoding.


\paragraph{CirC} Using the pipeline from prior work~\cite{ozdemir2025bounded}, we generate CirC benchmarks covering arithmetization passes with 0 to 6 arguments using bit widths from 0 to 32. Due to the large number of benchmarks, we focus on even bit widths, sample up to two benchmarks per arithmetization per bit width, and run with a five minute timeout and 8GB memory limit. As shown in Table~\ref{tab:results_bits}, \texttt{cvc5} outperforms \BitModEq on lower bit widths but scales poorly to the larger ones. \BitModEq  solves more benchmarks for 13 out of the 17 bit widths.

In general, \BitModEq performs well on constraints that simulate bit-level semantics with tightly bounded variables but is less effective on constraints with unbounded variables or heavy use of field-specific operations (e.g., inverses, negation). Using  large field constants, multiplicative inverse constraints
(e.g., equations of the form $x \cdot y = 1$), and field negation as an overapproximation, we estimate that 42\% of the CirC benchmarks
contain some  field-specific behavior.\footnote{In contrast, none of
the Jolt benchmarks rely on field-specific behavior.} These behaviors include arithmetizations corresponding to bitvector operators such as \texttt{bvnot} and \texttt{bvsle}. Although \texttt{cvc5}'s theory combination approach is more general and performs better when field-specific operations are present, its bit-splitting strategy introduces a large number of auxiliary variables, which significantly degrades performance at higher bit widths.

Across the benchmark suite, 130 SMT instances timeout during encoding generation. We include the breakdown in Appendix \ref{sec-appendix-circtimeout}. 
Similarly to the Jolt benchmarks, as seen in Figure~\ref{fig:cactus-circ}, \BitModEq is slower on simpler benchmarks but outperforms \texttt{cvc5} on larger ones.  


\subsection{Range Analysis Evaluation}
To evaluate the effectiveness of our range analysis  we substitute the algorithm for a portfolio of existing Lean automation tactics : \texttt{ring\_nf}, \texttt{linarith}, \texttt{nlinarith}, \texttt{aesop}, \texttt{grind}, and \texttt{omega}. 
 We additionally evaluate the impact of disabling case splitting (\ineqCases and \ineqMux).
Table~\ref{tab:ablation_results} shows the results. \BitModEq with the full range analysis solves substantially more benchmarks across all benchmark sets. Case splits play a more significant role for the Jolt benchmarks because they tend to contain more variable dependencies. 

\subsection{Runtime breakdown}
To understand the bottlenecks of \BitModEq, we measure the time spent in each stage of the pipeline on the solved Jolt benchmarks. Table~\ref{tab:runtime-breakdown} shows the results. Although essential for the success of \BitModEq, range analysis accounts for more than half of the total runtime. This is expected, since the pipeline attempts to prove range bounds for every operator encountered during translation.

\subsection{Discovered bugs}
During CirC evaluation, we identified a soundness bug in the 32-bit arithmetic right shifts (\texttt{bvashr}) related to integer overflow in CirC's Rust implementation. The issue was uncovered via SMT-based verification during tool comparison, did not occur at smaller bit widths, and shows the importance of verification beyond toy sizes.
The CirC developers have been notified and have fixed the issue.




\section{Related work}%
\label{sec:relwork}

Prior work on verifying ZKP arithmetizations broadly falls into two categories: manual verification in ITPs and automatic verification using SMT solvers. Manual approaches provide strong end-to-end correctness guarantees but require substantial proof engineering effort. For example, ~\cite{succinct2025} proves the correctness of the SP1 Hypercube zkVM instruction chips with respect to Sail's RISC-V semantics~\cite{armstrong2019isa}, and \cite{avigad2022verified} verifies that Cairo's AIR correctly constrains execution traces.
Automatic approaches aim to reduce manual effort by relying on SMT solvers. Özdemir et al.~\cite{ozdemir2025bounded} verify CirC’s finite-field-blasting pipeline using cvc5, checking all compiler arithmetizations from 0 to 4 arguments and 0 to 4 bits. While automated, this approach maintains large encodings and struggles to scale.

Beyond arithmetization pipelines, a large body of work focuses on the verification of ZK circuits themselves. On the proof-assistant side, Liu et al.~\cite{liu2024certifying} present CODA, a circuit language based on refinement types for specifying and proving circuit properties, while the CIVER tool~\cite{isabel2024scalable} defines transformation and deduction rules for verifying Circom circuits. On the SMT side, $QED^2$~\cite{pailoor2023automated} detects underconstrained circuits using finite field reasoning, and Zequal~\cite{stephens2025automated} checks consistency between circuits and witness generation. Complementary fuzzing-based approaches have also been effective at uncovering bugs in ZK circuits and compilers~\cite{chaliasos2025towards,takahashi2025zkfuzz,xiao2025mtzk}. While valuable for improving security, these techniques focus on circuit-level properties and do not establish end-to-end correctness guarantees. As demonstrated by bounded verification of CirC’s finite field blasting passes~\cite{ozdemir2025bounded}, correctness issues can occur at the level of arithmetization rules themselves, independent of the generated circuit.
Another closely related work is the verification of Lasso lookup tables by Kwan et al.~\cite{kwan2025verifying} in ACL2, which also relies on bit-blasting. Unlike our work, this work relies on ACL2's existing correspondence between finite fields and bitvectors and represents lookup tables as bits instead of finite field elements.  


Finally, our approach connects to prior work on SMT theory translations.  Zohar et al.~\cite{zohar2022bit} translate bitvector constraints to integers, while Pertseva et al.~\cite{pertseva2025integer} lift and lower constraints between integers and modular arithmetic using range analysis.  However, these works do not produce kernel-checked proofs in an ITP and do not simultaneously reason across finite field and bitvector theories, which is essential for verifying ZKP arithmetizations.

\section{Conclusion and Future Work}%
\label{sec:discuss}
This paper presents \BitModEq, a Lean tactic for automatically proving equivalence between finite field and bitvector encodings present in ZKP arithmetizations. By combining translation, range analysis, and bit-blasting in Lean, \BitModEq produces kernel-checked proofs and outperforms state-of-the-art solvers on real-world benchmarks. Multiple directions remain for future work. An important challenge is verifying systems along rarely exercised code paths, specifically addressing cases where bit widths exceed the modulus, which will require more scalable verification. It would also be valuable to characterize when translation to bitvectors is most effective and when direct theory combination or finite-field reasoning is preferable, as shown by CirC benchmarks. Finally, developing stronger native automation for finite field reasoning in Lean, including proof support for existing SMT-based decision procedures, remains crucial, as bitvector translation alone cannot capture all aspects of finite field semantics.



\bibliography{refs}

\appendix
\section{Proofs of Termination and Soundness of Translation}
\label{ap:proof-translation}
\setcounter{theorem}{0}
\begin{theorem}
    Termination: For any context $\Gamma$, no infinite derivation is possible using strategies $\toNN$ and  $\toBitVec$.
    
\end{theorem}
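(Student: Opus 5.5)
Since the strategies run sequentially ($\toNN$ first, then $\toBitVec$), I will prove termination of each loop separately. Each loop will get a well-founded measure on configurations $\Gamma=(G,H)$ that strictly decreases with every successful rule application. The \texttt{for} loops in $\toBitVec$ (lines 3--5) range over the finitely many formulas in $\Gamma$ and their finitely many subterms, so they terminate trivially. That leaves the three \texttt{while} loops. By the convention that a rule returns \True only when it changes $\Gamma$, it suffices to exhibit, for each loop, a measure into a well-founded order (a lexicographic tuple of naturals) that every rule of the loop strictly decreases.

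\textbf{The $\toNN$ loop.} I would use a lexicographic tuple built as follows.
\begin{enumerate}
\item The number of $\FFp$-sorted equations $t_1=t_2$ and of $t\cdot t=t$ patterns. This is decreased by \injNat and \sqrBds; no rule of the loop creates new ones.
\item A weighted size of $\FFp$ operator occurrences lying under a $\toNat$, e.g.\ the sum over occurrences of $\toNat(s)$ of the number of $\FFp$ operators in $s$. This is decreased by \mvNat, \mvNatIte, \mvNatSub and \mvNatSubOverflow, since each pushes $\toNat$ strictly inward and removes one $\FFp$ operator from beneath it. The $P$ and $\bmod$ that they introduce live in $\NN$ and are not counted.
\item For \mvZModSub, a count of subtraction nodes that occur as the left argument of an $\FFp$ addition. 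The rewrite $t_1-t_2+t_3 \to (t_1+t_3)-t_2$ strictly reduces this count.
\item The number of $\bmod$ occurrences, reduced by \mvMod (three become one) and by \dropMod (one is removed).
\item The number of $\FFp$ variables occurring in $\Gamma$ that lack a bound in $H$. This is decreased by \addBds, whose guard $\toNat(v_1)\le N\notin H$ ensures it fires at most once per variable. The set of variables is finite and fixed, because no rule introduces variables.
\end{enumerate}
The ordering must be chosen so that each rule does not increase the components ranked above the one it decreases. \sqrBds and \injNat come first: they may create new $\toNat$ occurrences, but those contain no $\FFp$ operators beyond what the equation already had, so component 2 should be measured in a way that accounts for this, for instance by counting $\FFp$ operators anywhere outside $\toNat$ as well. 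The distribution rules create $\bmod$ nodes, so the $\bmod$ count must rank below them. \mvMod and \dropMod never touch $\FFp$ structure. \addBds only enlarges $H$ and changes no term.

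\textbf{The two $\toBitVec$ loops.} \injBitVecLeq fires at most once per pair (hypothesis inequality, fixed width $b$) because of its guard $\toBVN{t_1}\bowtie\toBVN{t_2}\notin H$. The set of $\NN$-inequalities it ranges over is finite, and the loop only adds bitvector facts, not new $\NN$ inequalities. For the second loop I would use the lexicographic pair (number of $\NN$-sorted equations, number of $\NN$ operators $+,\cdot,-,\bmod,\mathtt{ite}$ lying under a $\toBVNoN$). \injBitVec decreases the first component. Each of \mvBV, \mvBVIte, \mvBVSub and \mvBVMod removes one $\NN$ operator from under $\toBVNoN$ and replaces it by a $\BV$ operator, which is not counted. \mvBVMod also creates $\toBVN{C}$ for a constant, which contributes no operators.

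\textbf{Main obstacle.} The delicate part is the $\toNN$ loop, specifically the interplay between \mvZModSub and \mvMod/\dropMod on one side and the distribution rules on the other. I need to check that no rule re-creates a pattern that an earlier component counts. For example, \mvNatSubOverflow introduces $+P$ and $\bmod$, which could in principle enable \mvMod, but \mvMod only shrinks terms. I also need to confirm that \mvZModSub cannot cycle with itself under associativity normalization. If the simple counts above prove insufficient, I would instead use a multiset-path-style ordering on terms, with precedence $\FFp$ operators $>$ $\toNat$ $>$ $\NN$ operators, and verify that each rewrite is oriented by it.
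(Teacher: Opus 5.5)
Your overall structure matches the paper's proof: one lexicographic measure per \texttt{while} loop, with essentially the same components. For $\toNN$ the paper uses ($\FFp$-equalities, operators under $\toNat$, $\bmod$ count, ``$+_{\FFp}$ after $-_{\FFp}$'', unbounded variables). It counts untranslated hypotheses for \injBitVecLeq and uses the same pair as you for the last loop. Your $\toBitVec$ part is fine.

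The step that fails is your third component. The number of subtraction nodes occurring as the left argument of an $\FFp$ addition is not strictly decreased by \mvZModSub. Apply the rule to $((a-b)-c)+d$ with $t_1=a-b$, $t_2=c$, $t_3=d$. The result $((a-b)+d)-c$ again has exactly one such node (now $a-b$), so the count stays at $1$. Inside $(((a-b)-c)+d)+e$ the same step raises the count from $1$ to $2$, because the rewritten redex is itself a subtraction in left position. Since \mvZModSub leaves your first two components unchanged, the tuple does not strictly decrease. The argument therefore breaks exactly at the point you flagged as delicate.

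The fix is an inversion-style count, which is what the paper's ``number of $+_{\FFp}$ after $-_{\FFp}$'' is meant to capture. Count pairs $(\alpha,\sigma)$ where $\alpha$ is a $+_{\FFp}$ node and $\sigma$ is a $-_{\FFp}$ node below $\alpha$.
\begin{itemize}
\item Rewriting $(t_1-t_2)+t_3$ to $(t_1+t_3)-t_2$ destroys the pair formed by the top addition and the subtraction, together with all pairs between that addition and subtractions inside $t_2$.
\item It creates no new pairs. The new addition has only $t_1,t_3$ below it, each ancestor still has the same number of subtraction nodes below it, and pairs internal to the $t_i$ are untouched.
\item So the count drops by at least one.
\item \mvMod and \dropMod only delete nodes (\mvMod may drop copies of $t_3$), and \addBds adds a hypothesis with no $\FFp$ operator. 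None of them can increase this count, so it can occupy your third slot.
\end{itemize}
Your multiset-path fallback would also work, but only if the precedence puts $+_{\FFp}$ above $-_{\FFp}$, which you left unspecified.

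Separately, your worry that \sqrBds and \injNat create new $\toNat$ occurrences is moot. In a lexicographic order, lower components may grow whenever a higher one strictly decreases, so component 2 needs no adjustment.
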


\begin{proof}
Since we apply $\toNN$ and  $\toBitVec$ only once, it is sufficient to prove that $\toNN$ and $\toBitVec$ contain no infinite cycles. We will prove this by mapping each of the rule applications in a while loop to a  strictly descending sequence.  \begin{enumerate}
 \item (Figure~\ref{fig:toNN} lines~3-4: Consider the tuple ordering (number of $\Sigma_{\FF}$-equalities in $\Gamma$, number of operators inside $\toNat$, number of $\bmod$ operators, number of $+_{\FFp}$ after $-_{\FFp}$ in all formulas, number of unbounded variables in $\Gamma$) \sqrBds and \injNat decrease the first entry.  \mvZModSub decreases the second to last entry without changing the first three. The next four rules decrease the second entry, without changing the first one. 
 The \dropMod and \mvMod decrease the third entry, without changing those before it. Last but not least, \addBds decreases the last by adding bounds. Since every entry is in this tuple is bounded below by 0, we will eventually reach termination.
 \item Figure~\ref{fig:toBV} lines~7-8: \injBitVecLeq decreases the number of $\Sigma_\NN$ hypotheses in $\Gamma$ without a corresponding $\Sigma_{\BV}$ hypothesis. 
 \item (Figure~\ref{fig:toBV}, lines~10-11) Consider the tuple ordering (number of $\Sigma_{\NN}$-equalities in $\Gamma$, number of operators inside $\toNat$). \injBitVec decreases the first entry, and all subsequent rules decrease the second without adding  $\Sigma_{\NN}$-terms.
 
\end{enumerate}
\end{proof}

\begin{theorem}
    Soundness: For any context $\Gamma$,  the bitvector translation $\Gamma'$ produced by $\toNN$ and $\toBitVec$ from $\Gamma$ is valid if and only if $\Gamma'$ is valid. 
\end{theorem}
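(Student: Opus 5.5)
The statement as written is literally ``$\Gamma'$ is valid iff $\Gamma'$ is valid''; the intended claim is clearly that $\Gamma$ is valid if and only if its translation $\Gamma'$ is valid, and that is what I will prove. The plan is to show that \emph{every single rule application} in Figures~\ref{fig:tr1} and~\ref{fig:tr2} is equivalence-preserving. Concretely, if a rule transforms $\Gamma_1=(G_1,H_1)$ into $\Gamma_2=(G_2,H_2)$, then for every interpretation of the original variables (field elements in $[0,P-1]$ via \toNat, bitvectors of their width), the conjunction of $H_1$ and the goals $G_1$ has the same truth value as the corresponding conjunction for $\Gamma_2$. Equivalently, $H_1\models G_1$ iff $H_2\models G_2$. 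By the termination theorem just proved, a run of $\toNN$ followed by $\toBitVec$ is a finite derivation. Induction on its length then transfers the equivalence from $\Gamma$ to $\Gamma'$.

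The rule-by-rule verification splits into three groups.

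\emph{Unconditional identities.} These rules rewrite a subterm $s$ to a term $s'$ with $s=s'$ in every model, so $\gamma[s/s']\leftrightarrow\gamma$. They are:
\begin{itemize}
\item \mvZModSub (commutative ring laws in $\FFp$);
\item \mvNat, because $\toNat$ is the canonical representative, so $\toNat(a\bowtie b)=(\toNat a\bowtie\toNat b)\bmod P$;
\item \mvNatIte and \mvBVIte (case split on the guard);
\item \mvNatSubOverflow, because $\toNat(b)\le P-1$ makes $\toNat(a)+P-\toNat(b)$ a non-truncated natural congruent to $a-b$;
\item \mvMod (modular arithmetic);
\item \mvBV, because $\toBVNoN$ is a ring homomorphism $\NN\to\ZZ/2^N$.
\end{itemize}
\injNat is an equivalence because $\toNat$ is injective on $\FFp$. \sqrBds holds because $\FFp$ has no zero divisors: $t(t-1)=0$ iff $t\in\{0,1\}$ iff $\toNat(t)\le 1$. \addBds only adds a fact true in every model, since $\toNat(v)\le P-1$ always.

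\emph{Guarded rules.} Here the side condition $H\models\cdots$ ensures the identity holds in all models of $H$, which is all that validity requires.
\begin{itemize}
\item \mvNatSub: under $\toNat t_1\ge\toNat t_2$, the truncated difference equals the field difference and lies in $[0,P-1]$.
\item \dropMod: $t_1\le P-1$ gives $t_1\bmod P=t_1$.
\item \injBitVec: $\toBVNoN(N,\cdot)$ is injective on $[0,2^N-1]$.
\item \injBitVecLeq: $\toBVNoN$ is order-preserving and order-reflecting on $[0,2^N-1]$, so the added hypothesis is implied by $H$.
\item \mvBVSub: with $t_1\le 2^N-1$, also $t_2\le 2^N-1$ follows from the bitvector comparison together with the bound, so unsigned bitvector subtraction agrees with truncated natural subtraction.
\item \mvBVMod: with $t_1,C\le 2^N-1$, $\bvToNat$ commutes with $\bmod$.
\end{itemize}

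\emph{Range analysis.} The side conditions are discharged by range analysis, and its soundness theorem (Section~\ref{sec:range_analysis}) guarantees that they genuinely hold under $H$. I may assume that result.

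The main obstacle is the bitvector phase, where the equivalence direction ``$\Gamma'$ valid $\Rightarrow\Gamma$ valid'' could fail if some subterm overflows $2^b$. A model of the bitvector problem then need not correspond to a natural-number model. This is where the global width $b$ computed by \calcBitWidth matters, as illustrated in the paper's $x=3_{[4]}$ example. I will state as an invariant that every $\NN$-subterm occurring under $\toBVNoN(b,\cdot)$ is bounded by $2^b-1$ in all models of $H$; this relies on \calcBitWidth being a sound overapproximation. I also need every hypothesis inequality to be duplicated at the same width $b$ by \injBitVecLeq before goals are cast. Under this invariant, $\toBVNoN(b,\cdot)$ is a bijection between the relevant natural values and their bitvectors. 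Every natural-number model of $H$ therefore corresponds exactly to a bitvector model of the translated hypotheses and conversely, which gives both directions.
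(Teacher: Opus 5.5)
Your reading of the statement ($\Gamma$ valid iff $\Gamma'$ valid) is right, and so is your route: show each rule of Figures~\ref{fig:tr1} and~\ref{fig:tr2} is an equivalence under the current hypotheses, then chain along the finite derivation. That is the paper's approach; its proof simply asserts this and defers to the Lean lemmas.

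The genuine gap is your check of \mvBVSub. It is false that $t_2\le 2^N-1$ follows from $t_1\le 2^N-1$ and $\toBVN{t_1}\ge\toBVN{t_2}$, because $\toBVN{t_2}$ only sees $t_2 \bmod 2^N$. Take $N=1$, $t_1=1$, $t_2=2$. Both printed premises hold ($1_{[1]}\ge 0_{[1]}$), yet $\toBV{1}{1-2}=\toBV{1}{0}=0_{[1]}$ while $\toBV{1}{1}-\toBV{1}{2}=1_{[1]}$. This is realizable, e.g.\ with $t_2=\toNat(x)$ for an unconstrained field variable $x$ and $H\models t_1=1$. So the rule's stated premises do not make the rewrite equivalence-preserving, and your argument fails at this step. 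The Lean lemma must carry a stronger hypothesis than the figure prints.

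What you need is $H\models t_2\le t_1$ over $\NN$. You can either add it as a premise or derive it from the pipeline. A truncated subtraction introduced by \toNN comes from one of two rules:
\begin{itemize}
\item \mvNatSub, whose premise is exactly $\toNat(t_2)\le\toNat(t_1)$;
\item \mvNatSubOverflow, whose minuend $\toNat(t_1)+P$ exceeds the subtrahend $\toNat(t_2)\le P-1$.
\end{itemize}
Every later rule replaces a term by one equal to it in all models of $H$. So the inequality persists and, together with $t_1\le 2^N-1$, bounds $t_2$.

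Your ``main obstacle'' paragraph aims at the wrong thing. The rules rewrite terms over the same variables. Once each rule is shown to preserve the models of $H$ and the truth of each goal in every such model, both directions follow by chaining. There is no separate ``bitvector model'' to pull back.

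That invariant, not your ``same truth value of $H\wedge G$'', is what yields $H\models G$ iff $H'\models G'$. Your formulation does not: $H=\top,G=p$ and $H'=p,G'=\top$ agree pointwise on $H\wedge G$, yet differ in validity.

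The paper's $x=3_{[4]}$ example concerns the opposite direction. There a valid problem's bitvector-only view, which is what bit-blasting sees, stops being valid, so it is a completeness issue. \calcBitWidth only affects whether side conditions can be discharged. Soundness is carried by those side conditions, which the sound range analysis proves. Resting your argument on \calcBitWidth being a sound overapproximation, a heuristic the paper never specifies or verifies, is unnecessary and leaves an unproved assumption in your proof. Your invariant would incidentally bound $t_2$ in \mvBVSub, but the pipeline argument above does that without trusting \calcBitWidth.
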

\begin{proof}
    A careful analysis of the rules confirms that for each rule, the conclusion is valid (i.e. $H\models G$) iff the premise is.
    For each rewrite rule in $\toNN$ and $\toBitVec$, we have a proof of the corresponding lemma stating the equivalence of the premise and the conclusion in the Lean theorem prover.  These lemmas either already existed in Lean's Mathlib \cite{10.1145/3372885.3373824} or were proven by us as part of our implementation in Lean (see Section~\ref{sec:implementation}).
\end{proof}

\section{Proofs of Termination and Soundness of Range Analysis}
\label{ap:proof-range}

\begin{theorem}
    Termination: For any context $\Gamma$, there is no infinite derivation from the range analysis calculus.
\end{theorem}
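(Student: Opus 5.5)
We argue as in the termination proof for the translation strategies: we exhibit a well-founded measure on goal sets that strictly decreases with every rule application of the range analysis calculus (Figures~\ref{fig:ineq_decomp}--\ref{fig:bit_constants}). Since rules replace one goal by finitely many smaller ones, the natural choice is the \emph{multiset ordering} (Dershowitz--Manna) over a per-goal measure. This ordering is well-founded whenever the underlying per-goal order is. For a single goal $\gamma$ we use the lexicographic tuple
\[
\mu(\gamma) = \bigl(\,|\mathtt{origVars}(\gamma)|,\ \mathrm{size}(\gamma),\ \iota(\gamma),\ |\mathtt{pVars}(\gamma)|\,\bigr),
\]
where $\mathrm{size}$ counts operator and conversion symbols occurring on the non-placeholder side, and $\iota(\gamma)$ is $1$ if $\gamma$ still satisfies the precondition of \introMVar (no placeholder variables, some original variable) and $0$ otherwise.

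\textbf{Decomposition rules.} \introMVar replaces $\gamma$ by two goals whose first two components are at most those of $\gamma$ and whose $\iota$ is $0$, so each new goal is strictly smaller. \ineqAddMul, \leSub, \ineqMod, and \leqIf replace $t_1\bowtie t_2\le w$ by goals over proper subterms plus a goal $w_1\bowtie w_2\le w$ with no original variables. Each resulting goal therefore has strictly smaller $\mathrm{size}$, or strictly fewer original variables, so each is below $\gamma$. \ineqMux replaces an arithmetic expression with an \texttt{ite}. We will count the \texttt{ite} as strictly smaller than $v_1\cdot t_1+(1-v_1)\cdot t_2$, since it drops $v_1$ and three operators. \ineqCases substitutes constants for $v_0,v_1$, so each of the four goals has strictly fewer original variables.

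\textbf{Elimination rules.} \ineqHyp, \leZMod, \leBV, and \geNat remove the goal $\gamma$ itself, which contains an original variable. In every other goal they substitute a constant for $w$, which does not increase the first three components and strictly decreases $|\mathtt{pVars}|$ wherever $w$ occurred. \ineqConst does the same. Its side condition guarantees that the affected goals mention no original variables, so the first component is unaffected there. \eval replaces a variable-free goal by a truth value, which we treat as the minimal element.

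\textbf{Main obstacle.} The delicate point is that elimination rules act globally: substituting $w$ changes goals other than the one matched. The multiset ordering only permits replacing one element by smaller ones, and here several elements change simultaneously. We handle this by observing that every touched goal strictly decreases while untouched goals stay fixed, and a simultaneous pointwise decrease of a sub-multiset is still a decrease in the multiset ordering. We must also check that substituting $w$ does not re-enable \introMVar on a goal that was already decomposed. This is why $\iota$ is placed below $\mathrm{size}$. A goal such as $w_1+w_2\le C$ has no original variables and so never satisfies \HasVars, and a goal $t\le C$ produced by \ineqHyp substitution has already lost its original-variable goal. If this bookkeeping fails, we would instead bound the number of \introMVar applications by the number of inequality subterms of the original goals.
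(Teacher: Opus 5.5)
\paragraph{Gap in the per-goal measure.} Your route, the multiset extension of a per-goal measure, differs from the paper's. The paper uses one global lexicographic measure whose leading component $\vars(G)$ counts original-variable occurrences across \emph{all} goals, followed by $\ops(G)$, $\expps(G)$ and $|G|$. Your per-goal measure breaks exactly at the point you flag as the main obstacle.

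Take original variables $x,y$ with $x\le 1\in H$ and the goal $x\le y$. \introMVar yields $\{x\le w,\ y\ge w\}$, both with $\mu=(1,0,0,1)$. \ineqHyp then deletes $x\le w$ and rewrites the sibling to $y\ge 1$. That goal has an original variable and no placeholder, so it satisfies the precondition of \introMVar again. Its measure is $\mu(y\ge 1)=(1,0,1,0)$, which exceeds both old values, so the multiset strictly increases.

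So your claim that substituting a constant for $w$ ``does not increase the first three components'' is false for $\iota$. Your remark that such a goal ``has already lost its original-variable goal'' conflates the deleted goal with the rewritten sibling. Placing $\iota$ below $\mathrm{size}$ does not help either, since substituting a numeral for $w$ leaves $\mathrm{size}$ unchanged.

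\paragraph{Why a global measure is needed.} No reordering or refinement of your components fixes this, as long as $\mu$ depends only on the goal itself (up to renaming placeholders). From the same state $\{x\le w,\ y\ge w\}$, \geNat on $y\ge w$ yields $\{x\le 0\}$. \introMVar turns the singletons $\{y\ge 1\}$ and $\{x\le 0\}$ into sets containing renamings of $y\ge w$ and $x\le w$. A multiset decrease at all four of these steps would force $\mu(y\ge w)<\mu(y\ge 1)\le\mu(x\le w)<\mu(x\le 0)\le\mu(y\ge w)$, which is a cycle. Termination genuinely relies on the \emph{deleted} goal paying for the re-enabled sibling, so the leading component must be global.

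With $\vars(G)$ first, the argument goes through:
\begin{itemize}
\item Each of \ineqHyp, \leZMod, \leBV and \geNat deletes a goal containing an original variable and substitutes only constants elsewhere, so $\vars(G)$ strictly decreases whatever happens to siblings.
\item \introMVar leaves $\vars(G)$ fixed, does not increase $\ops(G)$, and strictly decreases $\expps(G)$. So $\expps(G)$ must sit between $\ops(G)$ and $|G|$; the paper lists its tuple as $(\vars(G),\ops(G),|G|)$ but invokes $\expps(G)$ for this rule.
\item \ineqConst touches only goals without original variables and decreases $|G|$.
\end{itemize}

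Your fallback, bounding \introMVar applications by the inequality subterms of the original goals, is not an argument as it stands, because goals like $y\ge 1$ arise by substitution. Making it work amounts to charging each re-enabling to a deleted original-variable occurrence, i.e.\ to $\vars(G)$. Your checks for the decomposition rules, \ineqCases and \ineqMux carry over to the global measure.
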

\begin{proof}
Let $S$ be a sequence of rule applications.  We will show that any $S$ can be mapped to a strictly descending sequence in a well-founded order. We do this by showing how to construct a tuple from
a goal state $G$ in such a way that every rule application
reduces the tuple with respect to some well-founded order. We begin by defining helper functions. Let $\vars(G)$ denote the number of occurrences of original variables in all goals in $G$. Let $\ops(G)$ denote the number of natural number arithmetic operators $\{\cdot_{\NN}, +_{\NN}, -_{\NN},\bmod_{\NN}\}$ and $ \mathtt{ite}$ operators in all of the goals in $G$ that contain at least one original variable. Let $\expps(G)$ denote the number of goals in $G$ with an original variable but without a placeholder variable. Finally, let $|G|$ denote the number of goals in $G$. Our tuple ordering is $(\vars(G), \ops(G), |G|)$. We
define a total well-founded order $<_t$ on such tuples as the lexicographic order with non-negative standard integer ordering on all entries. We now show that every rule application decreases the tuple according to $<_t$. \introMVar decreases $\expps(G)$ by introducing placeholder variables. \ineqAddMul, \leSub,  \ineqMod, and \leqIf, all decrease $\ops(G)$ by splitting up and removing an operator from a goal that contains at least one original variable. They do this without increasing $\vars(G)$.  \ineqHyp, \leZMod, \leBV, and \geNat decrease $\vars(G)$, as they each remove a goal with at least one original variable from the proof context. \ineqConst decreases $|G|$ by removing a goal. 
\ineqCases decreases $\vars(G)$ by substituting constant values for the original variables in a goal. \ineqMux decreases $\ops(G)$ by replacing four arithmetic operators with one $\mathtt{ite}$ operator.
\end{proof}

\begin{theorem}
    Soundness: For any context $\Gamma$, if a derivation from $\Gamma$ reaches a context where the set of goals is empty, then all goals in $\Gamma$ hold.
\end{theorem}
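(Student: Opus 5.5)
The proof goes by induction on the length of the derivation, working backwards from the final context. The invariant to establish is the following. Call a goal set $G$ \emph{satisfied} under $H$ if there is an assignment to the placeholder variables of $G$ (i.e.\ those in \pVars, read as existentially quantified) such that $H \models \gamma$ for every $\gamma\in G$ under that assignment.

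The key lemma is \emph{backward preservation}: for every rule of Figures~\ref{fig:ineq_decomp}--\ref{fig:bit_constants}, if the conclusion's goal set is satisfied, then so is the premise's goal set. Given this lemma, the theorem follows immediately:
\begin{itemize}
\item the final goal set is empty, so it is trivially satisfied;
\item induction backwards along the derivation gives that the initial $G$ is satisfied;
\item the initial goals contain no placeholders, so satisfaction there is ordinary validity, i.e.\ all goals in $\Gamma$ hold.
\end{itemize}
Termination (the preceding theorem) is not needed here, since we assume a finite derivation reaching the empty goal set.

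The rules are checked one group at a time.

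\textbf{Decomposition rules.} Each rule replaces $\gamma$ by a set of goals that jointly imply $\gamma$, under the witness for the fresh placeholders read off from the conclusion's assignment.
\begin{itemize}
\item \introMVar: if $t_1\le w$ and $t_2\ge w$ (or the dual) hold, then $t_1\le t_2$ by transitivity.
\item \ineqAddMul and \leqIf: these follow from monotonicity of $+$, $\cdot$ and \texttt{ite} on $\NN$, i.e.\ interval arithmetic.
\item \leSub: truncated subtraction gives $t_1-t_2\le t_1$.
\item \ineqMod: this uses $t_1\bmod t_2\le t_2$. The rule as written allows both $\le$ and $\ge$, so here I would check that only the $\le$ instance is actually used, restricting the rule if necessary.
\end{itemize}

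\textbf{Placeholder elimination rules.} These instantiate $w$ with a concrete value $C$. The assignment $w:=C$ makes $\gamma$ true: it follows from the hypothesis $t_1\bowtie C\in H$ for \ineqHyp, from the bound $\toNat(v)\le P-1$ for \leZMod, from $\bvToNat(v)\le 2^N-1$ for \leBV, and from $t\ge 0$ for \geNat. The remaining goals are exactly $\gamma'[w/C]$, so satisfying the conclusion with the other placeholders gives a satisfying assignment for the premise. \ineqConst is the same argument with $w:=C$ making $C\bowtie w$ reflexively true.

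\textbf{Constant and bit rules.}
\begin{itemize}
\item \eval removes a goal only when it evaluates to true. I will read the rule as discarding $\gamma$ only when $\decide(\gamma)=\True$, and leaving a false goal otherwise, so that the empty goal set is unreachable in that case.
\item \ineqCases: the hypotheses $v_1\le 1$ and $v_2\le 1$ over $\NN$ mean each variable is $0$ or $1$. The four substituted goals therefore cover every possibility, and a case split recovers $\gamma$. These goals contain no placeholders, so the existential witnesses are unaffected.
\item \ineqMux: under $v_1\le 1$, the term $v_1 t_1+(1-v_1)t_2$ equals $\mathtt{ite}(v_1=0,t_2,t_1)$ in $\NN$, even with truncated subtraction. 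So the rewrite is an equivalence.
\end{itemize}

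\textbf{Main obstacle.} The delicate point is that placeholders are shared across goals and are eliminated globally. I must check that the assignment reconstructed backwards is consistent across goals. Because elimination substitutes a \emph{single} constant for $w$ in every goal simultaneously, one witness serves all goals at once. Freshness of the placeholders introduced by decomposition ensures they clash with nothing else.

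I will also note that the goal containing $w$ that is removed in the elimination rules is discharged by the chosen witness itself. This is where the phrase ``$\gamma'\in G\setminus\gamma$'' in the rules matters.

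Finally, as in the translation soundness proof, each of these lemmas is also justified in the implementation by a Mathlib or newly proven Lean lemma (Section~\ref{sec:implementation}), and Lean's metavariable mechanism enforces the consistency of placeholder instantiation.
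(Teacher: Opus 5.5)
Your proposal takes the same route as the paper. The paper's proof asserts that each range-analysis rule is either an equivalence or a strengthening, certified by an ``only-if'' Lean lemma, and composes these along the derivation. Your invariant (one witness for all placeholders, valid in every model of $H$) spells out what ``strengthening'' means when placeholders are existential, shared between goals, and instantiated globally, which the paper leaves implicit. Your per-rule checks are pen-and-paper versions of its Lean lemmas.

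\textbf{The \ineqMod step fails as justified.} You use $t_1 \bmod t_2 \le t_2$. In Lean $n \bmod 0 = n$, so this is false whenever $t_2$ evaluates to $0$ and $t_1$ to something positive. In such a model the conclusion goal $t_2 \le w$ is met by $w := 0$, while $t_1 \bmod t_2 \le 0$ fails. So, as with the $\ge$ instance you already flagged, the rule needs a guard $0 < t_2$ (of the kind \dropMod and \mvBVMod carry), or a side goal that discharges it. The guard holds automatically for the $\bmod\ P$ terms produced by the translation. It does not hold automatically for $\bmod$ terms already present in the original $\Sigma_{\NN}$ goals.

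\textbf{A smaller inaccuracy in \ineqCases.} You say the four goals it produces contain no placeholders, but the rule's guard does not ensure this. The \HasSub disjunct does not require \HasVars. For example, a goal $t_1 \le w$ produced by \introMVar qualifies when $t_1$ contains a subtraction and exactly two original variables. Your argument still works: the same value $\sigma'(w)$ satisfies all four instances. Since $v_1, v_2 \in \{0,1\}$ in every model of $H$, it therefore satisfies $\gamma$ in every such model.
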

\begin{proof}
Each rewrite rule in the range analysis calculus either maintains the logical equivalence of or strengthens the goals in the context.  This was verified by providing ``only-if'' Lean lemmas for each rule that either existed in Lean's Mathlib \cite{10.1145/3372885.3373824} or were proven in Lean during our implementation (Section~\ref{sec:implementation}).
\end{proof}

\section{Strategy Limitations}
\label{ap:limitation}

Applying \mvMod prior to \dropMod to saturation and then using an overapproximation in $\calcBitWidth$ can overinflate bit widths.
Consider cubing a variable $x \in \FF_7$. Under our current encoding,
\begin{equation}
x ^3  \rightarrow  (\toNat(x)^3 ) \bmod 7   \rightarrow  (\toBVNoN(9, \toNat(x))^3  \bmod  (7_{[9]})
\end{equation}
A more bit-width-aware encoding would keep the $\bmod$ in the second multiplication and propagate the modulus bit width in \calcBitWidth:
\begin{align}
& \rightarrow  (((\toNat(x)^2 ) \bmod 7) \cdot \toNat(x)) \bmod 7  \\ & \rightarrow  ((\toBVNoN(6, \toNat(x))^2  \bmod  (7_{[6]})) \cdot (\toBVNoN(6, \toNat(x)) \bmod  (7_{[6]})
\end{align}
However, each $\!\bmod\!$ operator requires additional range checks (to decide if either \dropMod or \mvBVMod  applies). Since overflow in the field is rare for our application, we consider the speedup gained from minimizing $\bmod$ operators worth the tradeoff, even if it results in some bit width inflation. 
\section{CirC Timeouts}
We present the timeouts for CirC SMT benchmark generations below. 


\label{sec-appendix-circtimeout}

\setlength{\tabcolsep}{2.5pt}
\renewcommand{\arraystretch}{1.15}
\begin{table}[]
\begin{tabular}{l *{17}{c} c}
\toprule
Solver &
\multicolumn{17}{c}{Bit width} & All \\
\cmidrule(lr){2-18}
& 0 & 2 & 4 & 6 & 8 & 10 & 12 & 14 & 16 & 18 & 20 & 22 & 24 & 26 & 28 & 30 & 32 & \\
\midrule

\BitModEq
& 0 & 0 & 0 & 0 & 0 &0 & 0 & 0 & 0 & 0 & 0 &0 & 0 & 0 & 0 & 0 & 0 &0  \\

SMT \tiny{CirC}
& 0 & 0 & 0 & 0 & 0 & 4 &3 & 4 & 1 & 9 & 21 & 18 & 20 & 18 &  18 & 12 & 0 & 130 \\

\midrule  
\# Bench. & 40 & 80 & 80 & 80 & 80 & 80 & 80 & 80 & 80 & 80 & 80 & 80 & 80 & 80 & 80 & 80 & 78 & 1318 \\
\bottomrule
\end{tabular}
\vspace{.5em}
\caption{CirC Statement generation timeouts after 1 minute}
\end{table}
\vspace{-.8cm}
It is worth noting that timeouts are generally concentrated at larger bit widths, but are absent  at bit width 32. Based on our inspection of CirC code, we hypothesize that this occurs because the preprocessing and folding passes are substantially faster for powers of 2.

\section{Benchmark Statistics}
\label{ap:stats}

We summarize the sizes and complexity of the CirC and Jolt  SMT benchmarks below.

\begin{table}[H]
\centering

\setlength{\tabcolsep}{4pt}
\renewcommand{\arraystretch}{1.1}

\begin{tabular}{lcccccccc}
\toprule
Benchmarks  &
Var. &
Var. &
Constr. &
Constr. &
Disj. &
Disj. &
Ops. &
Chars. \\
&
(avg) &
(max) &
(avg) &
(max) &
(avg) &
(max) &
(avg) &
(avg) \\
\midrule


SMT \tiny{CirC}
& 15.44
& 198
& 2.77
& 5
& 0.33
& 26
& 87
& 567{,}648\footnote{CirC benchmarks tend to have large variable names} \\
SMT \tiny{Jolt}
& 67.38
& 103
& 128
& 199
& 0
& 0
& 1013
& 49{,}218 \\

\bottomrule
\end{tabular}

\vspace{.3cm}
\caption{Statistics of the CirC and Jolt Lean and SMT benchmark suites.}
\end{table}



\end{document}
\endinput